\newtheorem{theorem}{Theorem}
\newtheorem{lemma}{Lemma}
\newtheorem{corollary}{Corollary}
\newtheorem{proposition}{Proposition}
\newtheorem{example}{Example}
\newcommand{\IP}{\textsf{IP2}\xspace}
\newcommand{\BIP}{\textsf{BIP2}\xspace}
\newcommand{\BIPLP}{\textsf{BIP2 Above LP}\xspace}
\newcommand{\VC}{\textsf{Vertex Cover}\xspace}
\newcommand{\VCLP}{\textsf{Vertex Cover Above LP}\xspace}
\newcommand{\OCT}{\textsf{Odd Cycle Transversal}\xspace}
\newcommand{\ASAT}{\textsf{Almost 2-SAT}\xspace}
\newcommand{\NMC}{\textsf{Node Multiway Cut}\xspace}
\newcommand{\NMCLP}{\textsf{Node Multiway Cut Above LP}\xspace}
\newcommand{\MINSAT}{\textsf{Min SAT}\xspace}
\newcommand{\bbN}{\mathbb{N}}
\newcommand{\bbR}{\mathbb{R}}
\newcommand{\opt}{\mathbf{opt}}
\newcommand{\lp}{\mathbf{lp}}
\newcommand{\val}{\mathbf{val}}
\newcommand{\gap}{\mathbf{gap}}
\newcommand{\calI}{\mathcal{I}}
\begin{document}
\title{Linear-Time FPT Algorithms via Network Flow}
\author{Yoichi Iwata\thanks{Department of Computer Science,
    Graduate School of Information Science and Technology,
    The University of Tokyo.
    Research Fellow of Japan Society for the Promotion of Science.
    \texttt{y.iwata@is.s.u-tokyo.ac.jp}
  }
  \and
  Keigo Oka\thanks{Department of Computer Science,
    Graduate School of Information Science and Technology,
    The University of Tokyo.
    JST, ERATO, Kawarabayashi Large Graph Project.
    \texttt{ogiekako@is.s.u-tokyo.ac.jp}
  }
  \and
  Yuichi Yoshida\thanks{
    National Institute of Informatics, and Preferred Infrastructure, Inc.
    \texttt{yyoshida@nii.ac.jp}
    Supported by JSPS Grant-in-Aid for Research Activity Start-up (24800082), MEXT Grant-in-Aid for Scientific Research on Innovative Areas (24106001), and JST, ERATO, Kawarabayashi Large Graph Project.
  }
}
\date{}
\maketitle

\begin{abstract}
In the area of parameterized complexity, to cope with NP-Hard problems,
we introduce a parameter $k$ besides the input size $n$,
and we aim to design algorithms (called FPT algorithms) that run in $O(f(k)n^d)$ time for some function $f(k)$ and constant $d$.
Though FPT algorithms have been successfully designed for many problems,
typically they are not sufficiently fast because of huge $f(k)$ and $d$.

In this paper, we give FPT algorithms with small $f(k)$ and $d$ for many important problems
including \OCT and \ASAT.
More specifically, we can choose $f(k)$ as a single exponential ($4^k$) and $d$ as one, that is,
linear in the input size.
To the best of our knowledge, 
our algorithms achieve linear time complexity for the first time for these problems.

To obtain our algorithms for these problems,
we consider a large class of integer programs, called \BIP.
Then we show that, in linear time, we can reduce \BIP to \VCLP preserving the parameter $k$, and we can compute an optimal LP solution for \VCLP using network flow.

Then, we perform an exaustive search by fixing half-integral values in the optimal LP solution for \VCLP.
A bottleneck here is that we need to recompute an LP optimal solution after branching.
To address this issue, we exploit network flow to update the optimal LP solution in linear time.

\end{abstract}

\thispagestyle{empty}
\setcounter{page}{0}

\section{Introduction}\label{sec:intro}
Assuming $\textrm{P} \neq \textrm{NP}$, there are no polynomial-time algorithms for NP-Hard problems in the worst case.
However, since many important problems are actually NP-Hard,
it is natural to study in which case these problems become polynomial-time tractable.
\emph{Parameterized complexity} is one of such direction, in which we measure the time complexity of algorithms with respect to the input size and another parameter.
A problem is called \emph{fixed parameter tractable (FPT)} with respect
to a parameter $k$ if it can be solved in time $f(k)n^{O(1)}$, where $n$ is the input size and $f$ is some computable function.
See~\cite{Downey:2012vk,Flum:2010vj,Niedermeier:2006wh} for books for parameterized complexity.

Though the initial motivation of parameterized complexity is making NP-Hard problems more tractable,
unfortunately many FPT algorithms have a disadvantage in their time complexity.
For example, the function $f(k)$ might be an astronomical tower of exponentials such as $2^{2^k}$ or the
degree $d$ of the polynomial in $n$ might be quite huge such as $n^{10}$.
Thus, it is desirable to improve these FPT algorithms so that $f(k)$ and $d$ become small \emph{simultaneously},
and our main contribution in this paper is giving FPT algorithms for many important problems whose running time is $O(c^k n)$ for some constant $c>1$.
That is, the polynomial part is only \emph{linear} whereas the function $f(k)$ is only a single exponential.

To describe the problems we are concerned with, we need to review previous results.
Indeed, there have been many studies on FPT algorithms with small $f(k)$ and $d$.
However, we note that many works pursuing small $d$ often neglect how large $f(k)$ is,
and many works pursuing small $f(k)$ often neglect how large $d$ is.

\OCT is a problem of finding the minimum vertex set whose removal makes the input graph to be bipartite.
We use the size of the optimal solution as a parameter.
Reed~et~al.~\cite{Reed:2004wb} proved that the problem is FPT by introducing a technique called \emph{iterative compression}.
The running time of their algorithm is $O(3^k knm)$.
Based on the graph minor theory, Fiorini~et~al.~\cite{fioriniHRV08} improved the polynomial part to be linear
for planar graphs.
For general graphs, Kawarabayashi and Reed~\cite{Kawarabayashi:2010uv} developed an
$O(f(k)(n+m)\alpha(n+m))$-time algorithm, where $\alpha(\cdot)$ denotes the inverse of the Ackermann function.

In \textsf{Max 2-SAT}, given a CNF with each clause containing two literals, we want to find an assignment so as to maximize the number of satisfied clauses.
\ASAT is a parameterized version of \textsf{Max 2-SAT}, in which a parameter is the minimum number of unsatisfied clauses.
Razgon and O'Sullivan~\cite{Razgon:2009tn} proved that \ASAT is FPT by designing an $O(15^k k m^3)$-time algorithm,
where $m$ is the number of clauses.
Their algorithm is also based on iterative compression.
Raman~et~al.~\cite{Raman:2011uj} improved the function $f(k)$ to $9^k$ by reducing the problem to \VC parameterized by the difference between the size of the optimal solution and the size of the maximum matching.
Here, \VC is the problem of finding a minimum set of vertices $S$ in a graph so that every edge is incident to $S$.
Cygan~et~al.~\cite{Cygan:2012fu} further improved $f(k)$ to $4^k$ by reducing the problem to \NMC,
which is a problem of finding a small subset of vertices whose removal
makes a given set of terminals separated, parameterized by the difference between the size of the
optimal solution and the LP lower bound.
And finally, Lokshtanov~et~al.~\cite{Lokshtanov:2012ud} obtained an $O^*(2.3146^k)$ algorithm by reducing the problem to \VC parameterized by the difference between the size of the optimal solution and the LP lower bound (we call the problem \VCLP).
They also showed that many other problems such as \OCT can be reduced to \VCLP and obtained faster algorithms.

Our first result is generalizing~\cite{Lokshtanov:2012ud} so that we can handle any problem in Binarized \IP.
Binarized \IP (\BIP in short) is a class of integer programs introduced by Hochbaum~\cite{Hochbaum:1993ww}.
In an instance of \BIP, each constraint has the form of $a_{i,j}x_i+b_{i,j}x_j+z_{i,j}\geq c_{i,j}$, where
$a_{i,j}$, $b_{i,j}$, and $c_{i,j}$ are integer constants with $a_{i,j},b_{i,j}\in\{-1,0,1\}$,
$x_i$ and $x_j$ are variables that can freely appear in other constraints, and $z_{i,j}$ is a
variable that can appear only in this constraint.
The objective function is a non-negative linear function of variables.
The variables can take any non-negative integer values.
We say that a variable is \emph{binary} if its domain is $\{0,1\}$, and 
if all variables in \BIP are binary, we call it as \emph{Binary \BIP}.
See Section~\ref{sec:preliminaries} for the formal definition of \BIP.
Many important problems including \VC, \ASAT, and \OCT can be easily written in the form of
(Binary) \BIP.

For an instance $\calI$ of $\BIP$, we define its \emph{integrality gap} as $\gap(\calI) = \opt(\calI) - \lp(\calI)$,
where $\opt(\calI)$ is the optimal IP value and $\lp(\calI)$ is the optimal LP value of its LP relaxation.
Then, \BIPLP is the problem of finding an optimal IP solution for an instance $\calI$ parameterized by $\gap(\calI)$.
Note that \VCLP is a special case of \BIPLP.
We give a generic procedure that reduces \BIPLP to \VCLP in linear time.
Note that instances that contain non-binary variables can also be reduced to \VCLP.
To this end, we use the property that any instance of \BIP admits half-integral optimal
primal/dual solutions~\cite{Hochbaum:2002eh}.
More specifically, we show the following.

\begin{theorem}\label{thm:reduction}
  Let $\calI$ be an instance of \BIP with $n$ variables and $m$ constraints.
  Suppose we have a pair of half-integral optimal primal/dual solutions for $\calI$.
  Then we can construct an instance $G$ of \VC of $O(n+m)$ vertices and edges with $\gap(G) =\gap(\calI)$ in $O(n+m)$ time,
  Furthermore, we can compute half-integral optimal primal/dual solutions for $G$ in $O(n+m)$ time.
\end{theorem}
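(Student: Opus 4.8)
The plan is to carry out the reduction as a sequence of purely \emph{local} rewritings --- one sweep over the $n$ variables and one over the $m$ constraints of $\calI$ --- each of which introduces only $O(1)$ new vertices and edges, so that building $G$ takes $O(n+m)$ time in total. The given half-integral optimal primal solution $x^*$ serves two purposes: it tells each rewriting how to \emph{orient} itself (which of $x_i,x_j$ to complement, which integer offset to strip off), and, together with the dual, it provides the data that will be pushed forward to $G$ at the end, so that no maximum flow is ever computed on $G$ from scratch.

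First I would reduce to the binary case. For each variable $x_i$ with $x_i^*$ not an integer, substitute $x_i=\lfloor x_i^*\rfloor+w_i$ with $w_i\in\{0,1\}$; for each $x_i$ with $x_i^*$ an integer, simply fix $x_i:=x_i^*$. Restricting to this $\pm\tfrac12$ window around $x^*$ without losing an optimal integral solution is a persistence property of \BIP analogous to the Nemhauser--Trotter theorem for \VC, which I would obtain from the existence of matching half-integral optimal primal/dual solutions (\cite{Hochbaum:2002eh}) via complementary slackness; this is also what keeps the eventual \VC instance of size $O(n+m)$ rather than growing with the magnitudes of the $x_i^*$. After the substitutions each constraint has the form $a_{ij}w_i+b_{ij}w_j+z_{ij}\ge c'_{ij}$ with $a_{ij},b_{ij}\in\{-1,0,1\}$, with $c'_{ij}$ an integer, and with $w_i,w_j\in\{0,1\}$; the integer constants generated along the way are accumulated into a single additive term $\delta$, which is subtracted from \emph{both} the optimal IP value and the optimal LP value and therefore leaves $\gap$ untouched.

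Next I would realize this binarized instance as a (vertex-weighted) \VC instance by a second local sweep. For each remaining variable $w_i$ I create a ``literal pair'' of vertices joined by an edge, with weights arranged so that a minimum cover picks exactly one of them and thereby assigns $w_i$, the objective coefficient of $w_i$ being carried by the weights of this pair. A hard constraint (one whose slack variable is absent), after at most one complement chosen according to $x^*$, falls into a short list of shapes: it is vacuous (discard it), it forces some $w_i$ (strip it into $\delta$), it renders $\calI$ infeasible (report so), or it is a two-literal clause, which becomes a single edge between the corresponding literal vertices. A constraint with a slack $z_{ij}$ of objective coefficient $d_{ij}$ is, after the same case analysis, either absorbed into the objective coefficients (when its slack is forced positive for every $\{0,1\}$ assignment of its two variables, so the penalty is linear) or a penalty clause, which I would implement by a constant-size gadget containing one extra vertex whose weight encodes $d_{ij}$, arranged so that the gadget contributes exactly $d_{ij}$ when the clause is left unsatisfied and nothing otherwise. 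Verifying that every gadget raises the optimal IP value and the optimal LP value by the same amount then yields $\opt(G)=\opt(\calI)-\delta$ and $\lp(G)=\lp(\calI)-\delta$, hence $\gap(G)=\gap(\calI)$, while by construction $G$ has $O(n+m)$ vertices and edges.

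Finally, for the half-integral primal/dual solutions on $G$: every rewriting above is a bijection on feasible fractional solutions and, by inspecting the linear-programming dual of each step, also carries optimal dual solutions to optimal dual solutions; composing these bijections sends $x^*$ to a half-integral optimal \VC LP solution and the given half-integral dual to a half-integral optimal fractional matching of $G$, each coordinate being given by an $O(1)$ formula, so the total time is $O(n+m)$. The step I expect to be the main obstacle is exactly the gadget construction: producing, \emph{uniformly} over all sign patterns $(a_{ij},b_{ij})$, over hard versus slacked constraints, and over binary versus a priori unbounded variables, gadgets that are at once constant-size, ``gap-exact'' (adding the same amount to $\opt$ as to $\lp$), and compatible with the primal/dual transport --- in particular the penalty-clause case, and the avoidance of any unary blow-up when the constants $c_{ij}$ or the original variable values are large, which is precisely where the assumed half-integral optimum and the persistence property are indispensable.
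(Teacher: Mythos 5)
Your high-level plan --- binarize around the given half-integral optimum via a persistence (Nemhauser--Trotter-style) lemma, then sweep the constraints turning each into $O(1)$ new vertices/edges, and push the given primal/dual through the same local rewritings so as to get half-integral optima for $G$ without solving any LP --- is the right skeleton, and it is close in spirit to the paper's reduction. But the two steps you yourself flag as the risky ones are exactly where the proof actually lives, and neither is supplied.

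First, the persistence step. You invoke a BIP-wide persistence lemma before doing any normalization, when constraints still have mixed signs $a_{ij},b_{ij}\in\{-1,0,1\}$ and a slack variable $z_{ij}$. The paper's Lemma~\ref{lem:fixing} is proved only after two normalization passes reduce every constraint to the form $x_i+x_j\ge c_{ij}$ with no slack variable; its proof constructs three auxiliary solutions by rounding/shifting towards an optimal $x^I$, and the feasibility checks there rely on that uniform $+1,+1$ shape. Asserting the same property for the raw $\pm1$-coefficient, slack-bearing form is a separate claim and has to be argued (the shifting argument at least needs to be rechecked per sign pattern, and the slack variables are a priori unbounded integers, not binary, so ``round $z_{ij}$ to its floor and a binary $w$'' is not automatically available). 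Since Theorem~\ref{thm:reduction} is exactly the statement that the reduction is gap-exact, this is not a detail you can defer.

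Second, the gadget step, which you correctly identify as the main obstacle. Two concrete issues arise. (i) A binarized constraint that still carries a slack variable has \emph{three} variables, so it is not a \VC constraint, and ``one extra vertex with weight $d_{ij}$'' does not obviously produce a gadget whose contribution to $\opt$ and to $\lp$ is the same. The paper eliminates slack by replacing $x_i+x_j+z_{ij}\ge c_{ij}$ with the three constraints $x_i+z^i_{ij}\ge c_{ij}$, $x_j+z^j_{ij}\ge c_{ij}$, $z^i_{ij}+z^j_{ij}\ge c_{ij}$ on two new shared variables of weight $d_{ij}$, and then proves gap-preservation and a closed-form dual transport for this gadget --- that is the content you'd need to reproduce. (ii) Your ``literal pair'' for complementation leaves open how to force $w_i+\bar w_i=1$ in the \emph{LP} relaxation, where an edge only enforces $w_i+\bar w_i\ge 1$ and the all-half assignment is admissible. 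The paper handles this by giving both copies a large additive weight $M$ (with constraint $x_i^++x_i^-\ge X$), so that any LP/IP optimum has equality; that $M$-trick (and the accompanying dual $\gamma_i$) is not something complementary slackness gives you for free. In short, the paper's ordering --- first make all coefficients $+1$ by the $M$-weighted doubling, then kill slack variables by the three-constraint gadget, and only \emph{then} apply persistence to the clean form $x_i+x_j\ge c_{ij}$ --- is precisely what avoids the case explosion and the underspecified gadgets in your plan. As it stands, your proposal is a program for a proof rather than a proof.
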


Our second and main result is showing an algorithm that,
given half-integral optimal primal/dual solutions,
solves \VCLP in linear time.
\begin{theorem}\label{thm:linear}
  Suppose that we are given an instance $G=(V,E)$ of \VC and its half-integral optimal primal/dual solutions.
  Then, we can solve \VCLP in $O(4^k (|V|+|E|))$ time.
\end{theorem}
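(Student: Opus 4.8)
The plan is to combine the Nemhauser--Trotter LP-persistence structure with a branching whose recursion tree has at most $4^k$ leaves, and to make each tree node cost $O(|V|+|E|)$ time by \emph{updating}, rather than recomputing, the network flow that certifies the LP bound.

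First I would reduce to an \emph{irreducible} instance. From the given half-integral optimal primal $x^\ast$, let $V_0,V_1,V_{1/2}$ be the vertices with $x^\ast_v=0,1,\tfrac12$; by LP-persistence for \VC there is an optimal vertex cover containing $V_1$ and disjoint from $V_0$, so I commit $V_1$, delete $V_0\cup V_1$, charge $|V_1|$, and on the remaining graph the all-$\tfrac12$ vector is LP-optimal. I then want the stronger property that all-$\tfrac12$ is the \emph{unique} LP optimum --- equivalently that $|N(I)|>|I|$ for every nonempty independent set $I$, so in particular the minimum degree is at least $2$: whenever some vertex can be fixed to $0$ or to $1$ in an optimal LP solution, LP-persistence again lets me commit it and shrink the instance, and all such ``pinnable'' vertices can be found simultaneously by a reachability / strongly-connected-component computation in the residual graph of the flow underlying the LP relaxation (the Nemhauser--Trotter bipartite double cover). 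Given the flow, this preprocessing runs in $O(|V|+|E|)$ and maintains a half-integral optimal primal solution together with the dual flow.

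Next is the branching. If the irreducible instance is empty, return the current $\lp$ value. Otherwise pick a vertex $v$ and branch on whether $v$ is in the cover: commit $v$ and recurse on $(G-v,\,b-1)$, or commit $N(v)$ and recurse on $(G-N[v],\,b-|N(v)|)$, where $b$ is the current budget and $k=b-\lp(G)$. Using uniqueness of the all-$\tfrac12$ optimum I would show that in each branch, after re-establishing irreducibility, $k$ drops by at least $\tfrac12$: if $\lp(G-v)$ were $\lp(G)-1$, then extending an optimal solution of $G-v$ by $x_v=1$ would be an optimal LP solution of $G$ with $x_v\ne\tfrac12$, contradicting uniqueness, so by half-integrality $\lp(G-v)=\lp(G)-\tfrac12$ and $k$ decreases by exactly $\tfrac12$; the same argument applied to a neighbor $u\in N(v)$ forces $\lp(G-N[v])\ge\lp(G)-|N(v)|+\tfrac12$, so $k$ decreases by at least $\tfrac12$ there too. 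Measuring progress in half-units, the worst case is the branching vector $(1,1)$, so the recursion tree has at most $2^{2k}=4^k$ leaves and $O(4^k)$ nodes.

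The hard part --- and the real content, as flagged in the introduction --- is updating the LP certificate after each branch in (amortized) linear time. Deleting one vertex of $G$ removes only $O(1)$ source/sink-incident unit-capacity arcs (plus internal arcs) from the flow network and changes the maximum flow value by $O(1)$, so the flow can be updated by cancelling a constant number of flow paths and running a constant number of augmenting searches, each $O(|V|+|E|)$, after which a residual SCC pass re-identifies the pinnable vertices --- all in $O(|V|+|E|)$. The delicate case is the second branch, which deletes $|N[v]|$ vertices, so a vertex-by-vertex update costs $O(\deg v\cdot(|V|+|E|))$ and would spoil the bound for high-degree $v$. Here one must exploit that the total number of vertices deleted along any root--leaf path is at most $|V|$ together with the structure of irreducible instances (which bounds how large a vertex degree can be relative to $k$), possibly also applying simple reduction rules up front to control $|V|+|E|$, so that the update cost summed over the whole recursion tree is still $O(4^k(|V|+|E|))$. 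Making this amortization precise --- choosing a potential on (parameter, instance size) for which the induction closes --- is, I expect, the main obstacle, and is exactly where the flow structure and the branching analysis have to be used together.
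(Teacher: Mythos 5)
Your preprocessing (persistence via the half-integral primal, then repeatedly pinning vertices found by reachability/SCC computations in the residual graph of the flow encoding the dual) is essentially the paper's Parts (I)--(II), and your measure argument via uniqueness of the all-half optimum is sound (modulo the fact that the instance is \emph{weighted}, so the conditions should read $w(N(I))>w(I)$, and the drop in $k$ is ``at least'' rather than ``exactly'' $\tfrac12$). The genuine gap is exactly the one you flag yourself: with the branching ``$v$ in the cover'' versus ``$N(v)$ in the cover'', the second branch deletes $|N[v]|$ vertices, the flow certificate can lose up to $\Theta(w(N[v]))$ (or $\Theta(\deg v)$ units even in the unweighted case) and must be re-augmented, and the parameter is only guaranteed to drop by $\tfrac12$ in that branch, so there is nothing to charge the $O(\deg v\,(|V|+|E|))$ update against. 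No degree bound in terms of $k$ is available for irreducible instances (degrees can be as large as $\lp+k=\Theta(n)$ after the Buss-type rule), so the amortization you hope for does not close, and you do not supply the missing potential function; as written this is a conjecture, not a proof. A secondary inaccuracy: your claim that deleting one vertex changes the maximum flow by $O(1)$ and needs only a constant number of augmentations is false in the weighted setting (capacities are $w(v)$), so even the first branch needs a different accounting.

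The paper sidesteps all of this by changing the branching object: it picks an \emph{uncovered edge} $\{u,v\}$ and branches on $x_u=1$ or $x_v=1$, so each branch deletes a \emph{single} vertex. The flow is updated by cancelling the flow through $l_u,r_u$ in $O(|\delta(u)|)$ time and then augmenting; if the augmentation amount is $\Delta$, a short calculation (gain $\tfrac{w(u)}{2}$ from raising $x_u$ to $1$, lose $\tfrac{w(u)}{2}-\tfrac{\Delta}{2}$ in the dual) shows the LP value rises by exactly $\tfrac{\Delta}{2}$, so $k$ drops by $\tfrac{\Delta}{2}\ge\tfrac12$. This lets the $O(\Delta(|V|+|E|))$ augmentation cost be charged directly to the decrease of the parameter, while the cancellation costs sum to $O(|E|)$ per root-to-leaf path; with depth at most $2k$ and at most $4^k$ leaves this gives $O(4^k(|V|+|E|))$ overall. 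If you replace your vertex/neighborhood branching by this edge branching and this charging scheme (which also handles weights correctly), your argument becomes the paper's proof; as it stands, the central amortization step is missing.
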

A key ingredient in our algorithm is network flow.
During exhaustive search,
the instance gradually changes and we need to update the optimal LP solution.
To avoid computing it from scratch,
we express the LP solution as a flow.
Then for each branching, in linear time, we update the flow and extract the optimal LP solution for the resulting instance from the flow.
We give a detailed explanation later.

From Theorem~\ref{thm:reduction}, we obtain the following corollary.
\begin{corollary}\label{cor:linear1}
  Given an instance $\calI$ of \BIP with $n$ variables and $m$ constraints and its half-integral optimal primal/dual solutions,
  we can solve \BIPLP in $O(4^k (n+m))$ time.
\end{corollary}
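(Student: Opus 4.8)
The plan is to obtain the corollary by composing Theorem~\ref{thm:reduction} and Theorem~\ref{thm:linear} as a chain of two linear-time, parameter-preserving reductions. First I would feed the given instance $\calI$, together with its half-integral optimal primal/dual solutions, into Theorem~\ref{thm:reduction}. In $O(n+m)$ time this yields a \VC instance $G=(V,E)$ with $|V|,|E|=O(n+m)$ and $\gap(G)=\gap(\calI)$, and, within the same time bound, a pair of half-integral optimal primal/dual solutions for $G$. Since $\gap(G)=\gap(\calI)=k$, the parameter is unchanged, so I can then invoke Theorem~\ref{thm:linear} on $G$ with these half-integral solutions; it solves \VCLP on $G$ in $O(4^k(|V|+|E|)) = O(4^k(n+m))$ time, producing a minimum vertex cover $S$ of $G$.

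The remaining step is to translate $S$ back into an optimal IP solution of $\calI$. For this I would have the construction of Theorem~\ref{thm:reduction} record, alongside $G$, the correspondence between the original variables $x_i$ (and the auxiliary $z_{i,j}$ variables, and any non-binary variables) and the vertices of $G$ that encode them; recovering the value of each variable from which of these vertices lie in $S$ then costs $O(n+m)$ total. Because the reduction preserves the integrality gap and, by construction, matches IP-feasible solutions of $\calI$ with vertex covers of $G$ up to a fixed additive shift in objective value in both directions, an optimal cover $S$ is carried to an optimal IP solution of $\calI$. Summing the three phases — reduction, solving \VCLP, back-translation, each $O(4^k(n+m))$ or $O(n+m)$ — gives the claimed bound.

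The asymptotic bookkeeping here is immediate, so the only point that genuinely needs care is that the map of Theorem~\ref{thm:reduction} is constructive in the \emph{reverse} direction and that this back-translation runs in linear time even when $\calI$ has non-binary variables: such a variable may be represented by several gadget vertices (in effect a unary encoding of its value), and one has to check that the combined size of all these gadgets is still $O(n+m)$ and that reading off an optimal assignment from $S$ never requires more than linear work. Once that is verified, Corollary~\ref{cor:linear1} follows simply from the transitivity of linear-time, parameter-preserving reductions.
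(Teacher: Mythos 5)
Your proposal is correct and takes essentially the same approach as the paper, which states the corollary as an immediate consequence of composing Theorem~\ref{thm:reduction} (linear-time, gap-preserving reduction to \VC with half-integral optimal primal/dual solutions) with Theorem~\ref{thm:linear}. Your concern about back-translation is reasonable but already handled by the paper's reduction chain in Appendix~\ref{sec:reduction}: each lemma there explicitly converts primal/IP solutions in both directions with $O(1)$ work per variable and constraint, and in Part~(III) a non-binary variable $x_i$ is \emph{not} unary-encoded but merely replaced by $\lfloor x^*_i\rfloor + x_i$ with $x_i\in\{0,1\}$ (justified by Lemma~\ref{lem:fixing}), so the instance size stays $O(n+m)$ and the reverse translation is trivially linear.
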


In order to obtain linear-time FPT algorithms from Corollary~\ref{cor:linear1}, the only remaining
part is to compute half-integral optimal primal/dual solutions in linear time.
Hochbaum~\cite{Hochbaum:2002eh} showed that for Binary \BIP,
we can compute them in $O(\ell(n+m))$ time, where $\ell$ is the optimal LP value.
Thus, by applying Corollary~\ref{cor:linear1} with these half-integral solutions,
we obtain the following.
Note that we can assume $\ell \leq k$.
\begin{theorem}\label{thr:linear2}
  We can solve \BIP in $O(4^k (n+m))$ time,
  where $n$ is the number of variables and $m$ is the number constraints.
\end{theorem}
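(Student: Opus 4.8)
The plan is to derive Theorem~\ref{thr:linear2} from Corollary~\ref{cor:linear1} by supplying the one missing ingredient, namely a subroutine that produces the half-integral optimal primal/dual solutions that Corollary~\ref{cor:linear1} takes as input. Given an instance $\calI$ of \BIP with $n$ variables and $m$ constraints, the algorithm would proceed in two phases: first compute a pair of half-integral optimal primal/dual LP solutions for $\calI$, and then invoke the algorithm of Corollary~\ref{cor:linear1} on $\calI$ together with these solutions to obtain an optimal IP solution, and hence $\opt(\calI)$. Note that $k$ is not part of the input; it is the instance quantity $\opt(\calI)$, and the statement to be proved is that the total running time is $O(4^{k}(n+m))$, without the algorithm ever being told $k$.

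For the first phase I would use Hochbaum's algorithm~\cite{Hochbaum:2002eh}, which for Binary \BIP computes half-integral optimal primal/dual solutions in $O(\ell(n+m))$ time, where $\ell=\lp(\calI)$. Feeding its output into the algorithm of Corollary~\ref{cor:linear1}, which runs in $O(4^{\gap(\calI)}(n+m))$ time, the overall cost (after the unavoidable $O(n+m)$ to read $\calI$) is $O(\ell(n+m))+O(4^{\gap(\calI)}(n+m))$. It remains to bound $\ell$ and $\gap(\calI)$ by $k=\opt(\calI)$. Since the objective of a \BIP instance is a non-negative linear function and every variable is constrained to be non-negative, every LP-feasible point has non-negative objective value, so $\lp(\calI)\ge 0$; and since the LP is a relaxation of the minimization IP, $\lp(\calI)\le\opt(\calI)$. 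Hence $\ell=\lp(\calI)\le k$ and $\gap(\calI)=\opt(\calI)-\lp(\calI)\le\opt(\calI)=k$. Substituting these, the running time is $O(k(n+m))+O(4^{k}(n+m))=O(4^{k}(n+m))$, as claimed.

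The argument has essentially no hard step: all of the substance lies in Theorems~\ref{thm:reduction} and~\ref{thm:linear} (hence in Corollary~\ref{cor:linear1}) and in Hochbaum's LP algorithm, and what is left is the elementary chain $0\le\lp(\calI)\le\opt(\calI)$, flagged in the text as ``$\ell\le k$''. The only point that genuinely requires care is the treatment of \BIP instances containing non-binary variables: Hochbaum's stated linear-time bound is for binary variables, so here one must either reduce such an instance to a binary one or verify that the analogous flow/cut computation still runs within a $k^{O(1)}(n+m)$ budget once the range of each variable is capped at $\opt(\calI)=k$ (which is without loss of generality by optimality). Any such $k^{O(1)}(n+m)$ bound is absorbed into $O(4^{k}(n+m))$, so it does not affect the conclusion, but one should check that it does not inflate the base of the exponential.
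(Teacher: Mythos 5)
Your proposal matches the paper's own proof: compute a half-integral optimal primal/dual LP pair via Hochbaum's $O(\ell(n+m))$-time algorithm, observe $\ell=\lp(\calI)\le k$ and $\gap(\calI)\le k$, and hand the pair to Corollary~\ref{cor:linear1}. Your closing caveat about non-binary variables is well taken but applies equally to the paper's own argument, which invokes Hochbaum's bound only for Binary \BIP; the theorem is implicitly understood (consistent with all the worked examples in Appendix~\ref{sec:problems}) to concern the binary case, and in the decision version one may in any event truncate variable domains at $k$.
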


As we have mentioned, Theorem~\ref{thr:linear2} immediately implies linear-time FPT algorithms for
\OCT and \ASAT.
To the best of our knowledge, they are first linear-time FPT algorithms for these problems.
To show the generality of Theorem~\ref{thr:linear2}, we give a list of other problems that can be
written in the form of Binary \BIP in Appendix~\ref{sec:problems}.

We mention here that, independently of our work, Ramanujan and Saurabh~\cite{Ramanujan13} also have obtained
an $O(4^k k^4 n)$-time algorithm for \ASAT via a different approach.

\paragraph{Proof Sketch.}
We now give a proof sketch of Theorem~\ref{thm:linear}.
Let $G=(V,E)$ be a \VC instance and $y^*$ be its half-integral optimal dual solution.
Then, we construct a network $\overline{G}$ from $G$ and a flow $f^*$ from $y^*$.
These network and flow will play a central role in our algorithm.
We also construct an optimal primal solution $x^*$ from the residual graph $\overline{G}_{f^*}$ of $\overline{G}$ with respect to $f^*$.
Using $x^*$, we can fix variables $v$ with $x^*_v = 0$ or $x^*_v = 1$.
Further, we want to find a set of variables $S$ that can be assigned integers without changing $\lp(G)$.
To this end, we will show that it suffices to find a strongly connected component $\overline{S}$ in $\overline{G}_{f^*}$ with a certain property.
We now assign integers to the corresponding set $S$ and this operation causes removal of $\overline{S}$ from $\overline{G}$.

Again, we want to find other variables that can be assigned integers.
Naively speaking, we can do so by recomputing a half-integral optimal dual solution, a feasible flow, the residual graph, and strongly connected components.
However, from the property of the network $\overline{G}$, we can reuse almost all information in the previous step.
It turns out that we can assign integers by keep finding and removing strongly connected components with a certain property from $\overline{G}_{f^*}$.
We can also keep the optimal flow $f^*$ at hand, and this process takes only linear time in total.

After this preprocess, the unique primal solution of $G = (V,E)$ is the one consisting of $\frac{1}{2}$ only.
Then, we can follow a standard approach to design an FPT algorithm.
That is, we pick an arbitrary edge $\{u,v\} \in E$ that is not covered yet by a variable assigned $1$, and we invoke recursion after assigning $x_u = 1$ or $x_v = 1$. 
To continue the process, instead of updating $y^*$, we directly update $f^*$ by augmenting paths.

We can show that $k$ decreases by $\frac{\Delta}{2}$ if the number of augmenting paths is $\Delta$.
Since other parts takes only linear time, 
the total running time becomes $O(2^{2k}(|V|+|E|)) = O(4^k(|V|+|E|))$.

\paragraph{Related Work:}
Before Lokshtanov~et~al.~\cite{Lokshtanov:2012ud} gave parameterized reductions from \OCT and \ASAT
to \VCLP, Cygan~et~al.~\cite{Cygan:2012fu} had given parameterized reductions to \NMCLP.
The reductions are one-way and it seems very difficult to express \NMC as \BIP.
This is because \VCLP and thus \BIP are known to have polynomial kernels~\cite{6375323} whereas obtaining
polynomial kernel for \NMC remains a (famous) open problem.
To the best of our knowledge, there is no linear-time FPT algorithm for \NMC in the literature,
and the fastest one (in terms of the input size) is $O(4^k \ell n^3)$-time algorithm
by Chen~et~al.~\cite{Chen:2007ha}, where $\ell$ is the number of terminals.
We note that, D\'aniel Marx mentioned that a linear-time FPT algorithm for \NMC is a folklore in his
lecture talk at GRAPH CUTS workshop in 2013.
However, we show an FPT algorithm for \NMC that runs in
$O(4^k (|V|+|E|))$-time by exploiting the network flow approach to make the result more accessible and show the general applicability of our approach (Appendix~\ref{sec:multiway}).
Note that by the reduction from \OCT or \ASAT to \NMCLP, though the difference between
the size of the optimal solution and the LP lower bound does not change, the optimal solution size itself increases.
Thus, a linear-time FPT algorithm for \NMC does not imply linear-time FPT algorithms for \OCT and \ASAT.

Compare to works pursuing small $f(k)$, 
there are a fewer number of works that pursue small degree $d$ of the polynomial.
However, several important results are known.
Bodlaender~\cite{Bodlaender96} developed a linear time FPT algorithm for computing
tree-decompositions of width $k$, and Courcelle~\cite{Courcelle90} showed that any graph property that can be expressed in
monadic second-order logic can be tested in linear time for graphs of fixed tree-width.
Kawarabayashi and Reed~\cite{kawarabayashi07linear} showed that the crossing number $k$ of a graph can be computed in linear time for a fixed $k$,
where \emph{crossing number} is the minimum number of crossings of edges when we draw the graph on the plane.
Fomin~et~al.~\cite{FominLMS12} developed a single exponential and linear time FPT
algorithm for \textsf{Planar-\textit{F} Deletion}.
Marx~et~al.~\cite{Marx11} showed that we can reduce the treewidth of a graph while preserving all the minimal $(s,t)$-separators of size up to $k$ in linear time for a fixed $k$, and obtained faster algorithms for several problems including \textsf{Bipartite Contraction}.
However, they concluded that their approach is very difficult to be generalized to other
problems such as \ASAT.

\paragraph{Organization:}
We give definitions used in this paper and introduce \BIP in Section~\ref{sec:preliminaries}.
In Section~\ref{sec:bip2}, we give a linear-time FPT algorithm for \VCLP and prove
Theorem~\ref{thm:linear}.
In Appendix~\ref{sec:problems}, we give a list of problems that can be expressed as Binary \BIP.
We give a proof of Theorem~\ref{thm:reduction} in Appendix~\ref{sec:reduction}.
In Appendix~\ref{sec:multiway}, we show a linear-time FPT algorithm for \NMC.

\section{Preliminaries}\label{sec:preliminaries}
In this section, we give definitions used in this paper and introduce binarized integer programming (\BIP).
We denote a set of non-negative integers by $\bbN$, and a set of non-negative half-integers by
$\bbN_{1/2}$, where a \emph{half-integer} is a multiple of $\frac{1}{2}$.

Let $G=(V,E)$ be an undirected graph.
The \emph{neighborhood} $N(v)$ of a vertex $v$ is $\{u\in V\mid \{u,v\}\in E\}$, and the
\emph{neighborhood} $N(S)$ of a vertex set $S\subseteq V$ is $\bigcup_{v\in S}N(v)\setminus S$.
We denote a set of edges incident to a vertex $v$ by $\delta(v)$.
For a subset $S\subseteq V$,
let $G[S]$ denote the \emph{subgraph induced by $S$}.

For a directed graph $G=(V,E)$ and its vertex $v\in V$, we denote the set of incoming edges of $v$ by
$\delta^-(v)$ and the set of outgoing edges of $v$ by $\delta^+(v)$.
Similarly, for a vertex set $S$, we define $\delta^-(S)$ as the set of incoming edges of $S$ and $\delta^+(S)$ as the set of outgoing edges of $S$.
We say that a vertex set $S \subseteq V$ is a \emph{strongly connected component} if, for any two vertices $u,v\in S$, there is a directed path from $u$ to $v$.
We say that $S$ is a \emph{tail strongly connected component} if it is strongly connected and $\delta^+(S) = \emptyset$.
A directed graph is called \emph{strongly connected} if the whole vertex set $V$ is strongly connected.
It is known that we can compute strongly connected components in $O(|V|+|E|)$ time.

A \emph{weighted graph} is a pair of a graph $G=(V,E)$ and a function $w:V \to \bbN$.
For a vertex set $S\subseteq V$, we denote the sum $\sum_{v\in S}w(v)$ by $w(S)$.
Similarly, an \emph{edge-weighted graph} is a pair of a graph $G=(V,E)$ and a function $w:E \to \bbN$.

A \emph{network} is a pair of a directed graph $G=(V,E)$ and an edge capacity
function $c:E \to \bbN$.
For vertices $s,t\in V$, an \emph{$s$-$t$ flow} is a function $f:E \to
\bbN$ satisfying that, for some $F \geq 0$, $\sum_{e\in\delta^+(s)}f(e)=\sum_{e\in\delta^-(t)}f(e)=F$,
$\sum_{e\in\delta^+(v)}f(e)=\sum_{e\in\delta^-(v)}f(e)$ for all $v\in V\setminus\{s,t\}$, and $0\leq
f(e)\leq c(e)$ for all $e\in E$.
We call $F$ the \emph{amount} of the flow.
Given a flow $f$ and a set of vertices $S$,
the \emph{out-flow} of $S$ refers to the restriction of $f$ to edges $e \in \delta^+(S)$ with $f(e) > 0$.
and the \emph{in-flow} of $S$ refers to the restriction of $f$ to edges $e \in \delta^-(S)$ with $f(e) > 0$.
A \emph{residual graph} of a network $(G,c)$ with respect to its flow $f$ is a directed graph $G_f=(V,E_f)$ with
$E_f=\{(u,v)\mid ((u,v) \in E\text{ and } f(u,v)<c(u,v))\text{ or }((v,u) \in E\text{ and } 0<f(v,u)) \}$.
It is known that we can compute an $s$-$t$ flow of an amount $F$ (if exists) in $O(F(|V|+|E|))$ time
using the Ford-Fulkerson algorithm.

Now, we define a class of integer programs, called \IP, introduced by
Hochbaum~\cite{Hochbaum:2002eh,Hochbaum:1993ww}.
An instance of \IP is of the following form:
\begin{align*}
\begin{array}{lll}
\text{minimize }& \displaystyle \sum_{i \in V} w_i x_i + \sum_{(i,j)\in E_1}d_{i,j} z_{i,j}\\
\text{subject to }& a_{i,j}x_i+b_{i,j}x_j+z_{i,j}\geq c_{i,j}&\text{for }(i,j)\in
E_1,\\
&a_{i,j}x_i+b_{i,j}x_j\geq c_{i,j}& \text{for }(i,j)\in E_2,\\
& x_i \in \bbN & \text{for }i \in V, \\
& z_{i,j} \in \bbN & \text{for }(i,j)\in E_1.
\end{array}
\end{align*}
Here $E_1,E_2 \subseteq V \times V$ are sets of pairs,
$x_{i}$ and $z_{i,j}$ are non-negative integer variables,
$w_i$ and $d_{i,j}$ are non-negative integers, and $a_{i,j}$, $b_{i,j}$, and $c_{i,j}$ are integers.
We call the variables $x_{i,j}$, which can appear in several constraints, \textit{shared variables}, and call the variables $z_{i,j}$, which can appear in only one constraint, \textit{independent variables}.
In \IP, each constraint imposed on $E_1$ or $E_2$ can contain only two shared variables and at most one independent variable.
When all coefficients of every constraint, $a_{i,j}$ and $b_{i,j}$, are from $\{-1,
0, 1\}$, the problem is called \textit{binarized \IP} (\BIP in short).
Additionally, when all variables of \BIP are binary, i.e., they are from $\{0,1\}$, the problem is called \textit{Binary \BIP}.

For an IP instance $\calI$, we denote by $\opt(\calI)$ the optimal IP value.
Its LP relaxation can be obtained by replacing the constraint of the form $x \in \bbN$ by a constraint of the form $x \geq 0$ for every variable $x$.
For an IP instance $\calI$, we denote by $\lp(\calI)$ the optimal value of its LP relaxation.
For a (primal or dual) solution $x$, $\val(\calI,x)$ denotes the LP value obtained by $x$.
We define the \emph{integrality gap} of $\calI$ as $\gap(\calI) = \opt(\calI) - \lp(\calI)$.\footnote{In the context of approximation algorithms, the integrality gap is often defined as the \emph{ratio} of $\lp(\calI)$ to $\opt(\calI)$}

The LP relaxation of \BIP and its dual LP admit half-integral optimal solutions~\cite{Hochbaum:2002eh}.
Moreover, for Binary \BIP, we can compute the optimal LP solutions both in the primal problem and the dual problem in $O(F (n+m))$ time,
where $F$ is the optimal LP value~\cite{Hochbaum:2002eh}.
Many important problems can be formulated as \BIP.
Some examples are given in Appendix~\ref{sec:problems}.

\section{A Linear-Time FPT Algorithm for \VCLP}\label{sec:bip2}
In this section, we give an $O(4^k (|V|+|E|))$-time algorithm for \VCLP and
prove Theorem~\ref{thm:linear}.
Given a graph $G=(V,E)$, the (primal) LP relaxation of \VC can be written as follows:
\begin{align*}
  \textbf{(Primal-VC)} & \quad
  \begin{array}{lll}
    \text{minimize }&\displaystyle \sum_{v\in V}w(v)x_v\\
    \text{subject to }&x_v+x_u\geq 1 & \text{for }\{u,v\}\in E,\\
    & x_u \geq 0 & \text{for } u \in V. \\
  \end{array}
\end{align*}
The \emph{all-half vector} refers to a vector $x \in \bbR^V$ such that $x_v = \frac{1}{2}$ for every $v \in V$.
We write $x \equiv \frac{1}{2}$ if $x \in \bbR^V$ is the all-half vector.
\begin{lemma}[\cite{Nemhauser:1975ko}]\label{lem:Nemhauser}
  The primal LP relaxation of \VC satisfies the following.
  \begin{itemize}
  \setlength{\itemsep}{0pt}
  \item[(1)] It admits a half-integral optimal solution.
  \item[(2)] For any optimal LP solution $x^L \in \bbR^V$, there is an optimal integer solution $x^I \in \{0,1\}^V$ such that $x^I_v=x^L_v$ holds for every $v \in V$ for which $x^L_v$ is an integer.
  \item[(3)] If the all-half vector is an optimal LP solution, then $w(S)\leq w(N(S))$ holds for every independent set $S\subseteq V$.
  \item[(4)] If the all-half vector is an optimal LP solution and $w(S)=w(N(S))$ holds for some independent set $S\subseteq V$, the following $x \in \bbN^V_{1/2}$ is also an optimal LP solution:
  If $v \in S$, then $x_v = 0$.
  If $v \in N(S)$, then $x_v = 1$.
  Otherwise, $x_v = \frac{1}{2}$.
  \item[(5)] If $w(S)<w(N(S))$ for every independent set $S$, the all-half vector is the unique optimal LP solution.
  \end{itemize}
\end{lemma}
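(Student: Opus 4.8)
I would prove the five items largely separately: (1) and (2) are the Nemhauser--Trotter theorem for weighted \VC, (3) and (4) are a single perturbation argument about the all-half vector, and (5) is a counting consequence of (1). I use throughout that truncating every coordinate of a feasible solution at $1$ preserves feasibility and does not increase the objective, so I may assume solutions lie in $[0,1]^V$. For (1): starting from an optimal $x\in[0,1]^V$, let $L=\{v:0<x_v<\tfrac12\}$ and $H=\{v:\tfrac12<x_v<1\}$; if both are empty then $x$ is half-integral. Otherwise let $x^\pm$ move the coordinates of $L$ by $\pm\epsilon$ and those of $H$ by $\mp\epsilon$, fixing all other coordinates. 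Feasibility for small $\epsilon>0$ is a finite case check (every edge incident to $L$ goes either to $H$, where the endpoint sum is preserved, or to a value-$1$ vertex). Since $x=\tfrac12(x^++x^-)$ is optimal, so are $x^+$ and $x^-$, which forces $w(L)=w(H)$; hence $x^-$ remains optimal as $\epsilon$ grows until the first moment some coordinate of $L$ reaches $0$ or some coordinate of $H$ reaches $1$ (no coordinate ever crosses $\tfrac12$). This strictly decreases the number of coordinates in $(0,\tfrac12)\cup(\tfrac12,1)$, so after at most $|V|$ rounds we obtain a half-integral optimum.

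For (3) and (4), suppose $\bar x\equiv\tfrac12$ is an optimal LP solution and $S$ is a nonempty independent set, and set $x^{(\epsilon)}_v=\tfrac12-\epsilon$ for $v\in S$, $\tfrac12+\epsilon$ for $v\in N(S)$, and $\tfrac12$ otherwise. For $0\le\epsilon\le\tfrac12$ this is feasible: no edge lies inside $S$, an edge between $S$ and $N(S)$ has endpoint sum exactly $1$, an edge leaving $S$ can only enter $N(S)$, and every other edge still has endpoint sum $\ge 1$. Since $\val(G,x^{(\epsilon)})=\tfrac12 w(V)+\epsilon\bigl(w(N(S))-w(S)\bigr)$, optimality of $\bar x$ as $\epsilon\to 0^+$ gives $w(S)\le w(N(S))$, which is (3); if in addition $w(S)=w(N(S))$ then $x^{(1/2)}$---exactly the vector described in (4)---has value $\tfrac12 w(V)=\lp(G)$ and is therefore optimal.

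For (5): the hypothesis already forces $\bar x$ to be optimal, for if it were not then a half-integral optimum from (1) would have a nonempty zero-set $V_0$ with $N(V_0)\subseteq V_1$, and comparing values would give $w(V_0)>w(V_1)\ge w(N(V_0))$, contradicting the hypothesis on $V_0$. For uniqueness, suppose $x^*\ne\bar x$ is also optimal; by the argument of (1) we may take $x^*$ half-integral in $[0,1]^V$ and still $\ne\bar x$, so its zero- and one-sets $V_0,V_1$ are not both empty, and equality of values gives $w(V_0)=w(V_1)$. If $V_0\ne\emptyset$ then $w(N(V_0))\le w(V_1)=w(V_0)$, contradicting the hypothesis; if $V_0=\emptyset$ then $w(V_1)=0$, which (assuming, as we may, positive vertex weights) forces $V_1=\emptyset$ and hence $x^*=\bar x$. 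Thus $\bar x$ is the unique optimal solution.

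Finally, (2)---the Nemhauser--Trotter persistence property. Given an optimal $x^L\in[0,1]^V$, put $P=\{v:x^L_v>\tfrac12\}$ and $Q=\{v:x^L_v<\tfrac12\}$; then $Q$ is independent (an edge inside $Q$ would have endpoint sum $<1$) with $N(Q)\subseteq P$, so for any minimum-weight vertex cover $C$ the set $C'=(C\setminus Q)\cup P$ is again a vertex cover, and its indicator vector $\mathbf{1}_{C'}$ agrees with $x^L$ on every coordinate where $x^L$ is integral (value $1$ puts $v\in P\subseteq C'$; value $0$ puts $v\in Q$, and $C'\cap Q=\emptyset$). It remains to show $C'$ is also minimum-weight, i.e. $w(P\setminus C)\le w(Q\cap C)$. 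Writing $A=P\setminus C$ and $B=N(A)\cap Q$, the fact that $C$ is a cover gives $A$ independent and $N(A)\subseteq C$, so $B\subseteq Q\cap C$; then perturbing $x^L$ by $-\epsilon$ on $A$ and $+\epsilon$ on $B$ stays feasible for small $\epsilon>0$ (every edge from $A$ to $V\setminus(A\cup B)$ goes to a vertex of value $\ge\tfrac12$, since any $Q$-neighbor of $A$ lies in $B$), and optimality of $x^L$ forces $w(A)\le w(B)\le w(Q\cap C)$. The only genuinely delicate points in the whole lemma are these finite feasibility case-checks---together with the choice of the right perturbation direction, which for (2) is the one non-obvious step---and the termination of the iteration in (1); everything else is elementary LP manipulation.
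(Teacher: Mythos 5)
The paper cites this lemma from Nemhauser--Trotter and never proves it, so there is no ``paper proof'' to compare against; what you have written is a self-contained replacement. Your overall strategy---perturbation arguments for (1)--(4) and a half-integral reduction for (5)---is correct and is a standard, if not the original, route. The proof of (2) in particular is a nice variant: Nemhauser--Trotter's original argument goes through the bipartite double cover and K\"onig's theorem, whereas you construct $C'=(C\setminus Q)\cup P$ directly and verify $w(P\setminus C)\le w(Q\cap C)$ by perturbing $x^L$ downward on $A=P\setminus C$ and upward on $B=N(A)\cap Q$. That is a correct and more elementary derivation of the persistency property, and I checked each feasibility case ($A$ independent, $A$--$B$ sums preserved, the non-$B$ neighbors of $A$ all have $x^L\ge\tfrac12$). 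The rounding argument for (1), the $\epsilon$-perturbation for (3)/(4), and the ``compare to a half-integral optimum'' argument for (5) are all sound.

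One genuine caveat: item (5) is \emph{false} without a positive-weight assumption, which the lemma statement does not make explicit. Take $K_4$ with weights $0,1,1,1$ on $a,b,c,d$. Every nonempty independent set is a singleton, and $w(\{a\})=0<3$, $w(\{b\})=1<2$, etc., so the hypothesis of (5) holds; yet $x=(1,\tfrac12,\tfrac12,\tfrac12)$ is a second optimal LP solution of value $\tfrac32$. The step in your proof that would fail is exactly the one you flagged: $V_0=\emptyset$ forces $w(V_1)=0$, but without $w>0$ that does not force $V_1=\emptyset$. So ``assuming, as we may, positive vertex weights'' is doing real work and should be stated as a standing hypothesis of the lemma rather than a casual aside; with that understanding, the proof is complete. (The same positivity is tacitly used in (1), where $w(L)=w(H)$ together with $L=\emptyset$ is taken to force $H=\emptyset$ and vice versa, and in the passage from any optimal $x^L$ to one in $[0,1]^V$ in (2).)
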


By using Properties~(1) and~(2), it is not hard to design an FPT algorithm by naive exhaustive search.
First, as long as the optimal LP solution $x^L \in \bbR^V$ contains a variable $v$ such that $x^L_v$ is an integer, we fix the value of $v$ as $x^L_v$.
Then, for each vertex $v\in V$, we try to fix $x^L_v=1$ and check whether the optimal LP value
increases.
If it remains the same, we can fix the value of $v$ to $1$, and if it increases, we restore
the value of $v$ to $\frac{1}{2}$.
By checking every vertex, the all-half vector becomes the unique optimal LP solution.
Now we pick an arbitrary edge $\{u,v\} \in E$ that is not covered yet by a variable whose value is fixed to $1$, and we invoke recursion after setting $x_u=1$ or $x_v=1$.
Since the all-half vector is the unique optimal LP solution, the value of the optimal LP solution
increases by $\Delta\geq \frac{1}{2}$.
Thus, by setting $x_u = 1$ or $x_v=1$, we can decrease $k$ by $\Delta$.
Hence, the depth of the search tree is bounded by $2k$.
In each node in the search tree, we might need to solve the LP relaxation $n$ times.
Let $T(n)$ be the running time to solve the LP relaxation, then the total running time is $O(4^k n
T(n))$, which is a huge polynomial.

The outline of our algorithm is similar to the naive exhaustive search.
However, we exploit all the properties in Lemma~\ref{lem:Nemhauser} to improve the running time to be linear.
We introduce the dual LP relaxation of \VC to describe our algorithm.
\begin{align*}
  \textbf{(Dual-VC)} & \quad
  \begin{array}{lll}
    \text{maximize }&\displaystyle \sum_{e\in E}y_e\\
    \text{subject to }&\displaystyle \sum_{e\in\delta(v)}y_e\leq w(v) & \text{for }v\in V, \\
    & y_e \geq 0 & \text{for } e \in E.
  \end{array}
\end{align*}

As a preprocess, we construct a network and its half-integral flow $f^*$ from the given half-integral optimal dual solution $y^*\in \bbR^E$.
A step in our exhaustive search consists of three parts.
\begin{itemize}
\setlength{\itemsep}{0pt}
\item[(I)] From the current graph and the half-integral flow $f^*$, we compute a corresponding
half-integral optimal primal solution $x^* \in \bbR^V$. (Though we have $x^*$ in the beginning, we need this for recursive steps.)
\item[(II)] We find all variables that can be fixed to integers without changing $\lp(G)$, and we transform the current graph $G$ so that the all-half vector is the unique optimal primal solution.
\item[(III)] We pick an arbitrary edge $\{u,v\} \in E$ that is not covered yet, and we go back to part~(I) recursively after setting $x_u = 1$ or $x_v = 1$.
  In order to perform the following steps, we remove covered edges from $G$ and update the half-integral flow $f^*$.
\end{itemize}

We will show that these three parts can be performed in linear time, and the depth of the search tree is bounded by $2k$.
Thus, the total running time of the algorithm is $O(4^k (|V|+|E|))$.

\paragraph{Preprocess:}
From the current graph $G=(V,E)$, we construct a network $(\overline{G}=(\overline{V}\cup\{s,t\},\overline{E}), c)$ as follows.
\begin{align*}
\overline{V}&=\overline{L} \cup \overline{R}, \quad \overline{L}=\{l_v \mid v\in V\}, \quad \overline{R}=\{r_v\mid v\in V\},\\
\overline{E}&=\{(s,l_v) \mid v\in V\} \cup\{(l_u,r_v)\mid \{u,v\}\in E\} \cup \{(r_v,t)\mid v\in V\},\\
c(e)&=\begin{cases}
w(v)& (e=(s,l_v)\text{ or }e=(r_v,t)),\\
\infty& (\text{otherwise}).
\end{cases}
\end{align*}

\begin{proposition}\label{prp:flow-from-dual}
  Given a dual solution $y$, define $f:\overline{E}\to\bbR$ as $f(s,l_v)=f(r_v,t)=\sum_{e\in\delta(v)}y_e$ for $v\in V$, and $f(l_u,r_v)=y_{e}$ for $e = \{u,v\}\in E$.
  Then $f$ is a flow in $\overline{G}$ of amount $2\val(G,y)$.
\end{proposition}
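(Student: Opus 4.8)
The plan is to verify directly that the function $f$ defined from the dual solution $y$ satisfies the three defining conditions of an $s$-$t$ flow in $\overline{G}$, and then compute its amount. This is a routine check, so I will be brief about which conditions need attention.

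First I would check capacity constraints. For an edge $e = (s, l_v)$ we have $f(s,l_v) = \sum_{e'\in\delta(v)} y_{e'} \le w(v) = c(s,l_v)$, where the inequality is exactly the dual feasibility constraint of (Dual-VC) at vertex $v$; similarly for $e = (r_v, t)$. For an edge $e = (l_u, r_v)$ coming from $\{u,v\}\in E$ we have $f(l_u,r_v) = y_{\{u,v\}} \ge 0 = $ lower bound, and the upper bound is $\infty$, so it holds trivially. We also need $f(e) \ge 0$ everywhere, which follows since $y \ge 0$ and sums of nonnegative numbers are nonnegative. (One minor point to note: an undirected edge $\{u,v\}\in E$ gives rise to which directed edge, $(l_u,r_v)$ or $(l_v,r_u)$? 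Since the construction adds $\{(l_u,r_v)\mid\{u,v\}\in E\}$, I would read this as: fix an orientation of each edge, or equivalently observe that $\delta(v)$ in the dual is over undirected edges and the definition $f(l_u,r_v)=y_e$ is consistent once we pick, for each $e=\{u,v\}$, one directed copy. The flow-conservation computation below is symmetric in $u,v$ so the choice does not matter.)

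Next I would verify flow conservation at every internal vertex, i.e. every vertex of $\overline{V} = \overline{L}\cup\overline{R}$. Take $l_v \in \overline{L}$: its unique incoming edge is $(s,l_v)$ with $f(s,l_v) = \sum_{e\in\delta(v)} y_e$, and its outgoing edges are $(l_v, r_u)$ for $\{u,v\}\in E$, carrying total flow $\sum_{\{u,v\}\in E} y_{\{u,v\}} = \sum_{e\in\delta(v)} y_e$. So in-flow equals out-flow at $l_v$. Symmetrically, for $r_v\in\overline{R}$ the incoming edges $(l_u, r_v)$ over $\{u,v\}\in E$ carry total $\sum_{e\in\delta(v)} y_e$, matching the unique outgoing edge $(r_v,t)$ with $f(r_v,t) = \sum_{e\in\delta(v)} y_e$. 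Hence conservation holds.

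Finally, the amount of the flow is $F = \sum_{e\in\delta^+(s)} f(e) = \sum_{v\in V} f(s,l_v) = \sum_{v\in V} \sum_{e\in\delta(v)} y_e = 2\sum_{e\in E} y_e = 2\,\val(G,y)$, where the factor $2$ appears because each edge $e\in E$ is counted once for each of its two endpoints, and $\val(G,y) = \sum_{e\in E} y_e$ by definition of the dual objective. This matches $\sum_{e\in\delta^-(t)} f(e) = \sum_{v\in V} f(r_v,t)$ by the same computation, so the flow value is well-defined and equals $2\val(G,y)$, completing the proof. There is no real obstacle here — the only thing to be a little careful about is the bookkeeping of which undirected edge maps to which directed edge, and the doubling in the amount computation; everything else is immediate from dual feasibility and the definitions.
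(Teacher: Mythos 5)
Your verification is correct and follows exactly the same route as the paper: check capacities via dual feasibility, check conservation at each $l_v$ and $r_v$, and compute the amount, with the factor $2$ coming from each edge of $E$ being counted at both endpoints. One small but substantive point in your parenthetical aside deserves correction. You suggest that one should ``pick, for each $e=\{u,v\}$, one directed copy'' and that ``the choice does not matter.'' In fact $\overline{E}$ contains \emph{both} $(l_u,r_v)$ and $(l_v,r_u)$ for every $\{u,v\}\in E$ --- this is forced by Proposition~\ref{prp:dual-from-flow}, whose formula $y(u,v)=\tfrac{1}{2}(f(l_u,r_v)+f(l_v,r_u))$ refers to both edges --- and both carry the flow value $y_{\{u,v\}}$. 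Your conservation computation implicitly (and correctly) uses this: you sum over all neighbors $u$ of $v$, crediting $l_v$ with an outgoing edge $(l_v,r_u)$ for each. But if one actually kept only a single oriented copy per undirected edge, conservation would \emph{fail} (an $l_v$ whose incident edge was oriented as $(l_u,r_v)$ would have no outgoing edge carrying that flow), so the choice very much does matter. So the hedge is misleading, but since the computation you actually carry out corresponds to the correct reading, the proof as a whole is sound.
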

\begin{proof}
  Since $y$ is a feasible dual solution,
  we have $f(s,l_v) = f(r_v,t) = \sum_{e\in\delta(v)}y^*_{e} \leq w(v)$ for any $v \in V$.
  Also, $f$ clearly satisfies the condition $\sum_{e \in \delta^+(v)}f(e) = \sum_{e \in \delta^-(v)}f(e)$ for any $v \in \overline{V}$.
  It is easy to see that the amount of $f$ is $2\val(G,y)$.
\end{proof}

\begin{proposition}\label{prp:dual-from-flow}
  Given a flow $f$ of $\overline{G}$, define a solution $y$ as $y(u,v) = \frac{1}{2}(f(l_u, r_v) + f(l_v, r_u))$ for $e = \{u,v\} \in E.$
  Then $y$ is a feasible dual solution for $G$ and $\val(G,y)$ is half the amount of $f$.
\end{proposition}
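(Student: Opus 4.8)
The plan is to verify the two claims separately: first that $y$ as defined is a feasible dual solution for $G$, and second that $\val(G,y)$ equals half the amount of $f$. For feasibility, I need to check two things: nonnegativity of each $y(u,v)$, and the degree constraint $\sum_{e \in \delta(v)} y_e \le w(v)$ at every vertex $v \in V$. Nonnegativity is immediate since $f$ is a flow and hence $f(l_u, r_v), f(l_v, r_u) \ge 0$, so their average is nonnegative. For the degree constraint, fix $v \in V$ and expand
\begin{align*}
\sum_{e \in \delta(v)} y_e
&= \sum_{u : \{u,v\} \in E} \tfrac{1}{2}\bigl(f(l_u, r_v) + f(l_v, r_u)\bigr)
= \tfrac{1}{2}\Bigl(\sum_{u : \{u,v\}\in E} f(l_u, r_v) + \sum_{u : \{u,v\}\in E} f(l_v, r_u)\Bigr).
\end{align*}
The first sum is exactly the total flow into $r_v$ along edges of the form $(l_u, r_v)$, which by flow conservation at $r_v$ equals $f(r_v, t) \le c(r_v,t) = w(v)$. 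Symmetrically, the second sum is the total flow out of $l_v$ along edges $(l_v, r_u)$, which by conservation at $l_v$ equals $f(s, l_v) \le c(s, l_v) = w(v)$. Hence $\sum_{e \in \delta(v)} y_e \le \tfrac{1}{2}(w(v) + w(v)) = w(v)$, establishing feasibility.

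For the value statement, I compute $\val(G,y) = \sum_{e \in E} y_e = \sum_{\{u,v\}\in E} \tfrac{1}{2}\bigl(f(l_u,r_v) + f(l_v,r_u)\bigr)$. Summing over all edges $\{u,v\} \in E$, every middle edge $(l_a, r_b) \in \overline{E}$ appears exactly once (when the unordered pair $\{a,b\}$ is processed, the term $f(l_a, r_b)$ shows up once, either as the ``$f(l_u,r_v)$'' piece or the ``$f(l_v,r_u)$'' piece), so $\sum_{\{u,v\}\in E}\bigl(f(l_u,r_v)+f(l_v,r_u)\bigr) = \sum_{e \in \overline{E} : e \text{ a middle edge}} f(e)$. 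This total middle-layer flow equals the amount $F$ of $f$: by conservation, $F = \sum_{e \in \delta^+(s)} f(e) = \sum_{v} f(s, l_v) = \sum_v (\text{flow out of } l_v) = \sum_{\text{middle edges}} f(e)$. Therefore $\val(G,y) = \tfrac{1}{2} F$, which is half the amount of $f$, as claimed.

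I do not anticipate a genuine obstacle here; the statement is essentially the bookkeeping counterpart of Proposition~\ref{prp:flow-from-dual}. The only point requiring a little care is the symmetrization: because an edge $\{u,v\} \in E$ is unordered while $\overline{G}$ contains both directed middle edges $(l_u, r_v)$ and $(l_v, r_u)$, one must average the two directions to obtain a well-defined value $y_e$, and then confirm (as above) that this averaging is exactly what makes both the degree constraint at each endpoint and the flow-amount accounting come out cleanly. Everything else is a direct application of flow conservation and the capacity bounds $c(s,l_v) = c(r_v,t) = w(v)$.
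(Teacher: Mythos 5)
Your proof is correct and follows essentially the same approach as the paper's, which simply asserts the inequality $\frac{1}{2}\sum_{\{u,v\}\in \delta(v)}(f(l_u,r_v)+f(l_v,r_u)) \leq w(v)$ and the value identity without elaboration. You have merely filled in the implicit bookkeeping, namely that flow conservation at $r_v$ and $l_v$ reduces each inner sum to $f(r_v,t)$ and $f(s,l_v)$ respectively (both capped by $w(v)$), and that summing over unordered pairs $\{u,v\}\in E$ visits each directed middle edge exactly once so the total is the flow amount $F$.
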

\begin{proof}
  Since $f$ is a flow,
  we have $\sum_{\{u,v\} \in \delta(v)}y(e) = \frac{1}{2}\sum_{\{u,v\}\in \delta(v)}(f(l_u,r_v)+f(l_v,r_u)) \leq w(v)$ for any $v \in V$.
  Also, $\val(G,y) = \sum_{e\in E}y(e) = \frac{1}{2}\sum_{\{u,v\}\in E}\left(f(l_u,r_v)+f(l_v,r_u)\right)$, which is half the amount of $f$.
\end{proof}
From the correspondence between a dual solution and a flow, we have the following.
\begin{corollary}\label{crl:max-flow-max-dual}
  From a maximum flow $f^*$ for $\overline{G}$, we can compute an optimal dual solution $y^*$ for $G$ (and vice versa).
\end{corollary}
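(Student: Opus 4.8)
The plan is to derive the corollary purely from Propositions~\ref{prp:flow-from-dual} and~\ref{prp:dual-from-flow}, which together furnish a value-respecting correspondence between feasible dual solutions of $G$ and feasible flows of $\overline{G}$: a dual solution of value $v$ yields a flow of amount $2v$, and a flow of amount $F$ yields a feasible dual solution of value $F/2$. Since both directions scale the objective by the same factor of $2$, the supremum of $\val(G,\cdot)$ over feasible duals equals half the maximum flow amount in $\overline{G}$, and both optima are attained (the LP relaxation of \VC has an optimum, and a maximum flow exists in $\overline{G}$).

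First I would take a maximum flow $f^*$ in $\overline{G}$ and apply Proposition~\ref{prp:dual-from-flow} to obtain a feasible dual solution $y^*$ with $\val(G,y^*)$ equal to half the amount of $f^*$. To see that $y^*$ is optimal, I would argue by contradiction: if some feasible dual $y'$ satisfied $\val(G,y')>\val(G,y^*)$, then Proposition~\ref{prp:flow-from-dual} would convert $y'$ into a feasible flow of amount $2\val(G,y')>2\val(G,y^*)$, which is the amount of $f^*$, contradicting the maximality of $f^*$. The ``vice versa'' direction is symmetric: from an optimal dual $y^*$, Proposition~\ref{prp:flow-from-dual} gives a flow $f^*$ of amount $2\val(G,y^*)$; if it were not maximum, a strictly larger flow would, via Proposition~\ref{prp:dual-from-flow}, produce a strictly better feasible dual, contradicting optimality of $y^*$. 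I would also note that each conversion is a single linear scan over $\overline{E}$, so it runs in $O(|V|+|E|)$ time, which is what the subsequent algorithm relies on.

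The only subtlety worth flagging is that the two maps are \emph{not} mutual inverses: composing dual~$\to$~flow~$\to$~dual symmetrizes $y$ across the paired arcs $(l_u,r_v)$ and $(l_v,r_u)$, so one cannot simply assert that the correspondence is a bijection. Consequently the argument must go through the objective-value (in)equalities together with a maximality/optimality comparison rather than through an explicit inverse map. This is the one place where a little care is needed, and the contradiction argument above takes care of it; the rest of the proof is routine.
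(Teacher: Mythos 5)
Your argument is correct and is essentially the paper's: the corollary is stated as an immediate consequence of Propositions~\ref{prp:flow-from-dual} and~\ref{prp:dual-from-flow}, using exactly the factor-$2$ value correspondence and a maximality/optimality comparison as you do. (Minor quibble: the composition dual~$\to$~flow~$\to$~dual is actually the identity, since Proposition~\ref{prp:flow-from-dual} already assigns $y_e$ to both arcs $(l_u,r_v)$ and $(l_v,r_u)$; it is flow~$\to$~dual~$\to$~flow that symmetrizes, but this does not affect your value-based argument.)
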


As a preprocess, we construct an optimal flow $f^*$ from the optimal dual solution $y^*$ using Proposition~\ref{prp:flow-from-dual}.
Note that $f^*$ is also half-integral.

\paragraph{Part (I):}
Given an optimal flow $f^*$, we create a primal solution $x^* \in \bbN^V_{1/2}$ from the residual graph $\overline{G}_{f^*}$ as follows:
\begin{align*}
x^*_v=\begin{cases}
0&(l_v\text{ is reachable from }s\text{ and }r_v\text{ is not reachable from }s\text{ in }\overline{G}_{f^*}),\\
1&(l_v\text{ is not reachable from }s\text{ and }r_v\text{ is reachable from }s\text{ in }\overline{G}_{f^*}),\\
\frac{1}{2}&(\text{otherwise}).
\end{cases}
\end{align*}

\begin{lemma}[\cite{Nemhauser:1975ko}]\label{lem:x*-optimal-primal-solution}
  $x^*$ is an optimal primal solution.
\end{lemma}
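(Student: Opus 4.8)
The plan is to show that $x^*$ is (i) feasible and (ii) optimal, the second part via the max-flow/min-cut theorem applied to the network $\overline{G}$.

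For feasibility I would check that $x^*_u+x^*_v\ge 1$ for every edge $\{u,v\}\in E$. Since each $x^*_w\in\{0,\frac{1}{2},1\}$, the only way this can fail is that one of $x^*_u,x^*_v$ equals $0$; say $x^*_u=0$, i.e.\ $l_u$ is reachable from $s$ in $\overline{G}_{f^*}$ but $r_u$ is not. Because $\{u,v\}\in E$, the arc $(l_u,r_v)\in\overline{E}$ has infinite capacity, so it cannot be saturated and hence survives in $\overline{G}_{f^*}$; therefore $r_v$ is reachable from $s$. If $l_v$ were also reachable from $s$, then the infinite-capacity arc $(l_v,r_u)\in\overline{E}$ would likewise lie in $\overline{G}_{f^*}$ and make $r_u$ reachable, a contradiction. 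Hence $l_v$ is not reachable while $r_v$ is, i.e.\ $x^*_v=1$, and the constraint holds.

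For optimality, let $A$ be the set of vertices reachable from $s$ in $\overline{G}_{f^*}$. At the point the lemma is used, $f^*$ is a maximum flow (it was obtained from the optimal dual solution $y^*$ via Proposition~\ref{prp:flow-from-dual} and Corollary~\ref{crl:max-flow-max-dual}), so $t\notin A$ and $(A,(\overline{V}\cup\{t\})\setminus A)$ is a minimum $s$–$t$ cut whose capacity equals the amount of $f^*$, namely $2\val(G,y^*)=2\lp(G)$. The feasibility argument already shows that no infinite-capacity arc $(l_u,r_v)$ leaves $A$, so the cut capacity is $\sum_{v:\,l_v\notin A}w(v)+\sum_{v:\,r_v\in A}w(v)$. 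Then I would run through the four reachability patterns of the pair $(l_v,r_v)$ that define the value $x^*_v\in\{0,\frac{1}{2},1\}$ and observe that in each case vertex $v$ contributes exactly $2w(v)x^*_v$ to this sum. Summing over $v$ gives cut capacity $=2\val(G,x^*)$, hence $\val(G,x^*)=\lp(G)$, and together with feasibility this proves $x^*$ is an optimal primal solution.

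The bulk of the work is the routine case analysis matching the cut capacity with $2\val(G,x^*)$; the only delicate points are the role of the infinite-capacity arcs — used both to exclude an infeasible edge and to ensure the residual-reachability cut has finite capacity — and the fact that $f^*$ is genuinely a maximum flow at the moment of invocation, which is what licenses the max-flow/min-cut step.
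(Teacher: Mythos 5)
The paper does not include a proof of this lemma; it simply cites Nemhauser and Trotter. Your proof is correct and is the standard max-flow/min-cut derivation of the Nemhauser--Trotter theorem, which matches the framework the paper sets up (the network $\overline{G}$, the flow $f^*$ from the dual, and the residual-reachability definition of $x^*$). Both steps check out: since every $(l_u,r_v)$ has infinite capacity it lies in $\overline{G}_{f^*}$ unconditionally, which forces feasibility (your argument that $x^*_u=0$ implies $x^*_v=1$ correctly rules out the $(0,0)$ and $(0,\frac{1}{2})$ cases as well), and also guarantees the reachability cut $(A,\bar A)$ is finite, so its capacity is exactly $\sum_{v:\,l_v\notin A}w(v)+\sum_{v:\,r_v\in A}w(v)=2\val(G,x^*)=2\lp(G)$ by max-flow/min-cut. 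This is exactly the argument the cited reference supplies, so you have filled in a proof consistent with the paper's intent rather than taken a genuinely different route.
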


\paragraph{Part (II):}
Now we have an optimal primal solution $x^*$ from Lemma~\ref{lem:x*-optimal-primal-solution}.
We denote by $y^*$ the optimal dual solution created from $f^*$.
From Property~(2), we can assume that there is a vertex cover of the minimum weight containing all vertices $v$ with $x^*_v=1$ and no vertices $v$ with $x^*_v=0$.
Let $V'$ be the set of vertices $v$ with $x^*_v  =\frac{1}{2}$ and $E' \subseteq E$ be the set of edges that are not covered by vertices $v$ with $x^*_v=1$.
We define $G'=(V',E')$.
\begin{lemma}
  Let $x'$ and $y'$ be the restriction of $x^*$ and $y^*$ to $V'$ and $E'$, respectively.
  Then, $x'$ and $y'$ are optimal primal and dual solutions for $G'$, respectively.
\end{lemma}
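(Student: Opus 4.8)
The plan is to verify that passing from $G$ to $G' = (V', E')$ preserves both feasibility and optimality of the restricted primal/dual pair. First I would check feasibility of $x'$ for (Primal-VC) on $G'$: by construction $E'$ consists precisely of the edges not covered by a vertex with $x^*_v = 1$, so every edge $\{u,v\} \in E'$ has both endpoints in $V'$ (neither endpoint can have $x^*$-value $1$, and by Property~(2) applied to the optimal integer solution associated with $x^*$, no edge of $G$ joins two vertices of $x^*$-value $0$, since such an edge would be uncovered; hence an uncovered edge has no endpoint of value $0$ either). Thus for $\{u,v\} \in E'$ we have $x'_u = x'_v = \tfrac12$ and the constraint $x'_u + x'_v \ge 1$ holds; nonnegativity is immediate. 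Feasibility of $y'$ for (Dual-VC) on $G'$ is even easier: $y'$ is a restriction of a feasible $y^*$, so $\sum_{e \in \delta_{G'}(v)} y'_e \le \sum_{e \in \delta_G(v)} y^*_e \le w(v)$ for every $v \in V'$.

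Next I would establish optimality via LP duality, i.e.\ show $\val(G', x') = \val(G', y')$. Write $\val(G,x^*) = \val(G,y^*) =: \opt_{LP}$ (equality by Lemma~\ref{lem:x*-optimal-primal-solution} and Corollary~\ref{crl:max-flow-max-dual}). The primal objective splits as
\[
  \val(G, x^*) = \sum_{v : x^*_v = 1} w(v) + \tfrac12 \sum_{v \in V'} w(v) = \sum_{v : x^*_v = 1} w(v) + \val(G', x').
\]
The hard part is to argue the same decomposition on the dual side, namely that every edge $e$ with $y^*_e > 0$ lies in $E'$, equivalently that no such edge is covered by a vertex $v$ with $x^*_v = 1$. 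This is exactly complementary slackness: if $x^*_v = 1$ then $v$ was declared ``$l_v$ not reachable, $r_v$ reachable'' in the residual graph $\overline{G}_{f^*}$, which forces the arc $(s, l_v)$ to be saturated, i.e.\ $f^*(s,l_v) = c(s,l_v) = w(v)$, so $\sum_{e \in \delta_G(v)} y^*_e = w(v)$ — the vertex constraint is tight. But tightness of the dual constraint at $v$ together with $x^*_v = 1$ does not by itself kill neighboring positive $y^*_e$; instead I would use the other half of complementary slackness: if $e = \{u,v\}$ has $y^*_e > 0$ then $f^*(l_u, r_v) > 0$ or $f^*(l_v, r_u) > 0$, and in either case the residual graph has an arc from $r_v$ (resp. $r_u$) back toward $l_u$ (resp. $l_v$); combined with reachability bookkeeping one gets that $x^*_u + x^*_v = 1$, so $e$ cannot have an endpoint of value $1$ paired with the other of value $0$, and in fact $e$ is not covered by a value-$1$ vertex. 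Hence $\sum_{e \in E} y^*_e = \sum_{e \in E'} y'_e + (\text{contribution accounted for by tight constraints at value-}1\text{ vertices})$, and a short counting argument (each value-$1$ vertex $v$ contributes $w(v) = \sum_{e \in \delta_G(v)} y^*_e$, and these edge sets partition $E \setminus E'$ appropriately, using that no edge of $E \setminus E'$ joins two value-$1$ vertices because — again by Property~(2) — we may assume the value-$1$ set is a minimal cover) yields $\sum_{e \in E \setminus E'} y^*_e = \sum_{v : x^*_v = 1} w(v)$.

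Putting these together, $\val(G', x') = \opt_{LP} - \sum_{v: x^*_v = 1} w(v) = \val(G', y')$, so by weak LP duality both $x'$ and $y'$ are optimal for $G'$. I expect the genuine obstacle to be the careful complementary-slackness / reachability argument identifying which positive $y^*_e$ survive the restriction and verifying the partition $E \setminus E' = \bigsqcup_{v: x^*_v = 1} \delta_G(v)$ (after using minimality of the value-$1$ cover to rule out edges between two value-$1$ vertices being double-counted); the feasibility checks and the final duality step are routine. A cleaner alternative I would consider is to avoid the edge-counting entirely: show directly that $f^*$ restricted to the sub-network $\overline{G'}$ built from $G'$ is still a maximum flow there (the removed vertices and their incident flow exactly account for the removed capacity at saturated source/sink arcs), and then invoke Proposition~\ref{prp:dual-from-flow} and Lemma~\ref{lem:x*-optimal-primal-solution} on $G'$ to read off optimality of $y'$ and $x'$ simultaneously.
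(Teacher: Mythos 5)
Your overall strategy---compute $\val(G',x')$ and $\val(G',y')$ by subtracting the removed contributions and then conclude by LP duality---is sound and close in spirit to the paper's. However, the intermediate claim you single out as ``the hard part'' is false: it is \emph{not} true that every edge with $y^*_e>0$ lies in $E'$. Complementary slackness gives only $x^*_u + x^*_v = 1$ when $y^*_e>0$, and for a half-integral $x^*$ this allows $(x^*_u,x^*_v)$ to be $(0,1)$, $(1,0)$, or $(\frac{1}{2},\frac{1}{2})$. Your step ``so $e$ cannot have an endpoint of value $1$ paired with the other of value $0$'' is simply wrong ($1+0=1$); such edges are in fact typically the ones that carry positive dual weight, since they account for the tight dual constraint at the value-$1$ endpoint. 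Similarly, appealing to ``minimality of the value-$1$ cover'' to rule out edges between two value-$1$ vertices misses the actual mechanism: such edges may well exist, but they have $y^*_e=0$ because the primal constraint is slack ($x^*_u+x^*_v=2>1$). Once both halves of complementary slackness are applied correctly one does recover $\sum_{e\in E\setminus E'}y^*_e = \sum_{v\in U}w(v)$ with $U=\{v: x^*_v=1\}$, and your conclusion then follows; but as written the argument has a real gap.

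The paper sidesteps this bookkeeping entirely. It first argues that $x'$ is primal-optimal for $G'$ (a cheaper feasible solution on $G'$ would extend, by assigning $1$ on $U$ and $0$ on the value-$0$ set, to a feasible solution on $G$ cheaper than $x^*$), and then it needs only the one-sided bound $\val(G,y^*)-\val(G',y') = \sum_{e\,\text{incident to}\,U}y^*_e \leq \sum_{v\in U}w(v)$, which is dual feasibility alone and uses no complementary slackness. Combined with $\val(G,x^*)=\val(G,y^*)$ this yields $\val(G',x')\leq\val(G',y')$, and weak duality together with primal-optimality of $x'$ then forces $y'$ to be dual-optimal. Your alternative suggestion (show directly that the restriction of $f^*$ is a maximum flow of $\overline{G'}$ and read off optimality from Proposition~\ref{prp:dual-from-flow}) is essentially what the paper observes immediately after the lemma and would also work.
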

\begin{proof}
  The solution $x'$ must be an optimal primal solution in $G'$ since we have set integers to variables according to $x^*$.

  Let $U \subseteq V$ be the set of vertices $v$ with $x^*_v = 1$.
  Note that $\val(G,x^*) - \val(G',x') = \sum_{v \in U}w(v)$.
  On the other hand,
  $\val(G,y^*) - \val(G',y') = \sum_{e \in E: \text{incident to }U }y^*(e) \leq \sum_{v \in U}w(v)$.
  Since $\val(G,x^*) = \val(G,y^*)$,
  it follows that $\val(G',x') \leq \val(G',y') $.
  Since $x'$ is an optimal primal solution, $y'$ must be an optimal dual solution.
\end{proof}
Note that $x'$ in the lemma above is again the all-half vector in $\bbR^{V'}$.
Now we restrict our attention to $G'$ and replace $G$ by $G'$.
Also, we can recompute the residual graph just by ignoring vertices in $\overline{V}$ corresponding
to vertices in $V \setminus V'$.
Since the restriction of $y^*$ is an optimal dual
solution of $G'$, the restriction of $f^*$ is a maximum flow of $\bar{G'}$.

In order to transform the current graph $G$ so that it admits the all-half vector as its unique optimal LP solution,
we need several properties of the residual graph $\overline{G}_{f^*}$.
To avoid confusion, we use $\overline{\phantom{}\cdot\phantom{}}$ to denote a vertex set in $\overline{V}$, e.g., $\overline{S}$.
For a subset $\overline{S}\subseteq \overline{V}$, we define $S_L=\{v\in V\mid l_v\in \overline{S} \cap \overline{L} \}$ and $S_R=\{v\in V\mid r_v\in \overline{S} \cap \overline{R}\}$.
For a subset $T\subseteq V$, we define $\overline{L}_T=\{l_v \in \overline{V} \mid v\in T\}$ and $\overline{R}_T=\{r_v \in \overline{V} \mid v\in T\}$.

\begin{lemma}\label{lem:residual1}
  For a vertex set $\overline{S} \subseteq \overline{V}$, the following are equivalent:
  \begin{itemize}
  \setlength{\itemsep}{0pt}
    \item[(1)] There is no edge from $\overline{S}$ to $\overline{V}\setminus \overline{S}$ in $\overline{G}_{f^*}$.
    \item[(2)] $N(S_L)=S_R$ and $w(S_L)=w(S_R)$.
  \end{itemize}
  \end{lemma}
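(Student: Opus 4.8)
The plan is to unwind what "no edge from $\overline{S}$ to $\overline{V}\setminus\overline{S}$ in $\overline{G}_{f^*}$" means edge-type by edge-type, and match it against the two conditions in (2). Recall the three kinds of edges in $\overline{G}$: source edges $(s,l_v)$ of capacity $w(v)$, the infinite-capacity "middle" edges $(l_u,r_v)$ for $\{u,v\}\in E$, and sink edges $(r_v,t)$ of capacity $w(v)$. In the residual graph, a middle edge $(l_u,r_v)$ is present (since $c=\infty$ and $f^*\ge 0$), and its reverse $(r_v,l_u)$ is present iff $f^*(l_u,r_v)>0$. Since $\overline{S}\subseteq\overline{V}$ does not contain $s$ or $t$, the relevant residual arcs leaving $\overline{S}$ are: forward middle arcs $l_u\to r_v$ with $l_u\in\overline{S}$, $r_v\notin\overline{S}$; reverse middle arcs $r_v\to l_u$ with $r_v\in\overline{S}$, $l_u\notin\overline{S}$; and arcs to/from $s,t$ which stay inside $\overline{V}\setminus\overline{S}$ only through $s$ or $t$, hence are irrelevant for arcs landing in $\overline{V}\setminus\overline{S}$.

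First I would handle condition $N(S_L)=S_R$. The inclusion $S_R\subseteq N(S_L)$ is automatic only after we argue it; more usefully, $N(S_L)\subseteq S_R$ is exactly equivalent to the absence of forward middle arcs leaving $\overline{S}$: if $v\in N(S_L)$, there is $u\in S_L$ with $\{u,v\}\in E$, so the residual arc $l_u\to r_v$ exists, and "no edge leaving $\overline{S}$" forces $r_v\in\overline{S}$, i.e. $v\in S_R$; conversely if $N(S_L)\subseteq S_R$ then every forward middle arc from $\overline{S}\cap\overline{L}$ lands in $\overline{S}$. Symmetrically, the absence of reverse middle arcs $r_v\to l_u$ leaving $\overline{S}$ says: whenever $r_v\in\overline{S}$ and $f^*(l_u,r_v)>0$ we must have $l_u\in\overline{S}$. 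Now I would bring in the flow-conservation/capacity bookkeeping: summing flow conservation over $\overline{S}$, or more directly using that $f^*$ corresponds via Proposition~\ref{prp:flow-from-dual} to the optimal dual $y^*$ with $f^*(s,l_v)=f^*(r_v,t)=\sum_{e\in\delta(v)}y^*_e$, I want to show that "no residual arc leaving $\overline{S}$" forces $\overline{S}$ to be of the form $\overline{L}_{S_L}\cup\overline{R}_{S_R}$ with all source edges into $\overline{L}_{S_L}$ and all sink edges out of $\overline{R}_{S_R}$ saturated, and that this in turn forces $w(S_L)=w(S_R)$; conversely these saturation facts give back (1).

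The cleanest route for the weight equality is a min-cut argument. Consider the $s$-$t$ cut $(A,B)$ with $A=\{s\}\cup(\overline{V}\setminus\overline{S})$... wait — rather, take $A=\{s\}\cup\overline{S}$ is not a cut unless $t\notin A$, which holds. If (1) holds, no residual arc leaves $\overline{S}$, so in particular inside $\overline{G}$ every original arc from $\overline{S}$ to $\overline{V}\setminus\overline{S}$ is saturated (middle arcs have $\infty$ capacity, so there are none of those — forcing $N(S_L)\subseteq S_R$), and every original arc from $\overline{V}\setminus\overline{S}$ into $\overline{S}$ carries zero flow. Tracking the arcs incident to $s$ and $t$: arcs $(s,l_v)$ with $l_v\in\overline{S}$ contribute, arcs $(r_v,t)$ with $r_v\in\overline{S}$ contribute. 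Because $f^*$ is a maximum flow and $\overline{S}$ has no residual arc leaving it, the cut $\delta^+(\{s\}\cup\overline{S})$ is saturated, giving $\sum_{v\in S_L}w(v)=\sum_{v\in S_R}w(v)$ once we also know the middle arcs do not cross and the incoming arcs are empty; this is the computation I would spell out. The reverse direction (2)$\Rightarrow$(1) is then: $N(S_L)=S_R$ kills forward middle arcs out of $\overline{S}$, and $w(S_L)=w(S_R)$ together with maximality of the flow forces the source arcs into $\overline{L}_{S_L}$ and sink arcs out of $\overline{R}_{S_R}$ to be saturated and the "crossing" flow to vanish, which kills the reverse middle arcs out of $\overline{S}$ as well.

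The main obstacle I anticipate is the bookkeeping in the reverse direction: deducing from the mere numerical equality $w(S_L)=w(S_R)$ (plus $N(S_L)=S_R$) that the flow must be "concentrated" on the bipartite block $(\overline{L}_{S_L},\overline{R}_{S_R})$ — i.e. that no flow leaks from $\overline{L}_{S_L}$ to $\overline{R}\setminus\overline{R}_{S_R}$ (immediate from $N(S_L)=S_R$) and, crucially, that no flow enters $\overline{R}_{S_R}$ from $\overline{L}\setminus\overline{L}_{S_L}$. This last point is where maximality of $f^*$ must be used: if such leaking flow existed, the saturation forced by the weight equality would be violated, contradicting either feasibility or the assumed equality. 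I would make this precise by writing the flow-conservation identity on the set $\overline{L}_{S_L}\cup\overline{R}_{S_R}$ and comparing in-flow and out-flow through $s$ and $t$ against $w(S_L)$ and $w(S_R)$. Everything else is routine case analysis on the three edge types.
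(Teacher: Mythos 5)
Your approach is in the same spirit as the paper's (unwind the residual edges by type, then do flow bookkeeping), but there is a genuine gap where you get the saturation facts you need. You write ``Because $f^*$ is a maximum flow and $\overline{S}$ has no residual arc leaving it, the cut $\delta^+(\{s\}\cup\overline{S})$ is saturated,'' and in the reverse direction you say the weight equality plus maximality of $f^*$ ``forces the source arcs into $\overline{L}_{S_L}$ and sink arcs out of $\overline{R}_{S_R}$ to be saturated.'' Neither of these follows from what you've assumed. Condition~(1) concerns only arcs from $\overline{S}$ to $\overline{V}\setminus\overline{S}$ and says nothing about residual arcs $s\to l_v$ for $l_v\notin\overline{S}$ or $r_v\to t$ for $r_v\in\overline{S}$; so $\{s\}\cup\overline{S}$ need not be a set with no outgoing residual arc, and is not in general a minimum $s$--$t$ cut. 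Maximum-flow optimality guarantees that \emph{some} $s$--$t$ cut is saturated, not that every source and sink arc is. (A two-vertex, one-edge graph with $w(u)=1$, $w(v)=2$ already has an unsaturated source arc in its max flow.)

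The fact you actually need is proved in the paper from the standing hypothesis that the all-half vector is an optimal primal solution at the moment this lemma is invoked: then $\lp(G)=\tfrac12\sum_v w(v)$, while $\sum_e y^*_e=\tfrac12\sum_v\sum_{e\in\delta(v)}y^*_e\le\tfrac12\sum_v w(v)$, so equality forces $\sum_{e\in\delta(v)}y^*_e=w(v)$ for every $v$, i.e. \emph{every} arc incident to $s$ or $t$ is saturated in $f^*$. This is independent of $\overline{S}$ and of condition~(1). Once you have it, the out-flow of $\overline{S}\cap\overline{L}$ is exactly $w(S_L)$, the in-flow of $\overline{S}\cap\overline{R}$ is exactly $w(S_R)$, and your edge-type bookkeeping (or the paper's chain $w(S_L)\le w(N(S_L))\le w(S_R)\le w(S_L)$, or the min-cut comparison with the trivial cut $\{s\}$, which is also saturated and hence of equal value) closes both directions. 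So the fix is to state and use the all-half/dual-tightness observation up front, rather than trying to extract the saturation from~(1) and flow maximality, which is where the proposed argument breaks.
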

\begin{proof}
  In the proof, we use the notation of a neighbor set $N(\cdot)$ for $G$ (neither $\overline{G}$ nor $\overline{G}_{f^*}$).

  Since the all-half vector is an optimal primal solution, the optimal LP value is $\frac{1}{2}\sum_{v\in V}w(v)$.
  However in the dual, $\sum_{e \in E}y_e = \frac{1}{2}\sum_{v \in V}\sum_{e \in \delta(v)}y_e \leq \frac{1}{2}\sum_{v \in V}w(v)$.
  Thus, $\sum_{e \in \delta(v)}y_e = w(v)$ must hold for every $v \in V$.
  Hence, in the flow $f^*$, every edge incident to $s$ or $t$ is saturated.
  This means that the amount of the out-flow of $\overline{S} \cap \overline{L}$ is
  $w(S_L)$ and the amount of the in-flow of $\overline{S} \cap \overline{R}$ is $w(S_R)$.
  We define $\overline{E}_L = \{(l_u,r_v) \in \overline{E} \mid l_u\in \overline{S}, r_v\not\in \overline{S}\}$ and $\overline{E}_R = \{(r_v,l_u) \mid (l_u,r_v) \in \overline{E}, f(l_u,r_v)>0, r_v\in \overline{S}, l_u\not\in \overline{S}\}$.

  $(1)\Rightarrow(2)$:
  Note that no edge from $\overline{L}$ to $\overline{R}$ can be saturated.
  Thus, the set of edges outgoing from $\overline{S}$ to $\overline{V}\setminus \overline{S}$ in $\overline{G}_{f^*}$ is $\overline{E}_L \cup \overline{E}_R$ (Recall that $\overline{V}$ does not contain $s$ and $t$).
  Hence, $\overline{E}_L = \overline{E}_R = \emptyset$ from the assumption.
  We have $N(S_L)\subseteq S_R$ from $\overline{E}_L = \emptyset$.
  Also, we have $w(S_R)\leq w(S_L)$ from $\overline{E}_R = \emptyset$.
  Since the amount of the out-flow of $\overline{S} \cap \overline{L}$ is at most the amount of the in-flow of $\overline{R}_{N(S_L)}$, we have $w(S_L)\leq w(N(S_L))$.
  Therefore, we obtain $w(S_L)\leq w(N(S_L))\leq w(S_R)\leq w(S_L)$, which implies $N(S_L)=S_R$ and
  $w(S_L)=w(S_R)$.

  $(2)\Rightarrow(1)$: Since $N(S_L)=S_R$, $\overline{E}_L$ must be an empty set.
  Moreover, since $w(S_L)=w(S_R)$, all the in-flow of $\overline{R}_{N(S_L)} = \overline{R}_{S_R} = \overline{S} \cap \overline{R}$ must come from $\overline{S} \cap \overline{L}$.
  Therefore $\overline{E}_R$ must be an empty set.
  Thus, there is no edge from $\overline{S}$ to $\overline{V}\setminus \overline{S}$.
\end{proof}

\begin{lemma}\label{lem:residual2}
  If a tail strongly connected component $\overline{S}$ of $\overline{G}_{f^*}$ satisfies $S_L\cap S_R=\emptyset$, then $S_L$ is an independent set of $G$ and $w(S_L)=w(N(S_L))$.
\end{lemma}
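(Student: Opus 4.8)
The plan is to leverage Lemma~\ref{lem:residual1}: a tail strongly connected component $\overline{S}$ has, by definition, $\delta^+(\overline{S}) = \emptyset$ in $\overline{G}_{f^*}$, so in particular there is no edge from $\overline{S}$ to $\overline{V} \setminus \overline{S}$. Hence condition~(1) of Lemma~\ref{lem:residual1} holds, and we immediately get $N(S_L) = S_R$ and $w(S_L) = w(S_R)$. Combined with the hypothesis $S_L \cap S_R = \emptyset$, this already gives $N(S_L) = S_R$ with $S_L \cap N(S_L) = \emptyset$ and $w(S_L) = w(S_R) = w(N(S_L))$; so the only thing that still needs an argument is that $S_L$ is an independent set of $G$.

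First I would show independence. Suppose for contradiction that $\{u,v\} \in E$ with $u, v \in S_L$, i.e., $l_u, l_v \in \overline{S} \cap \overline{L}$. Since $\{u,v\}\in E$, the edge $(l_u, r_v)$ lies in $\overline{E}$ with infinite capacity, so it is unsaturated and therefore present in the residual graph $\overline{G}_{f^*}$. Because $\overline{S}$ is a tail strongly connected component, $\delta^+(\overline{S}) = \emptyset$, and this residual edge $(l_u, r_v)$ starts in $\overline{S}$; hence its endpoint $r_v$ must lie in $\overline{S}$ as well, i.e., $v \in S_R$. But then $v \in S_L \cap S_R$, contradicting the hypothesis $S_L \cap S_R = \emptyset$. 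Therefore no such edge exists and $S_L$ is an independent set.

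The remaining conclusion $w(S_L) = w(N(S_L))$ is then just a restatement of what we extracted from Lemma~\ref{lem:residual1}: condition~(2) gave $N(S_L) = S_R$ and $w(S_L) = w(S_R)$, so $w(S_L) = w(S_R) = w(N(S_L))$.

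I do not expect a genuine obstacle here; the lemma is essentially a corollary of Lemma~\ref{lem:residual1} together with one short residual-edge argument. The one point to be careful about is the direction of the residual edge coming from an edge $\{u,v\}$ internal to $S_L$: the forward copy $(l_u, r_v)$ (equivalently $(l_v, r_u)$) is always in $\overline{G}_{f^*}$ since $(l_u,r_v)$ has capacity $\infty$, and it is this forward residual edge — not a backward one — that forces $r_v \in \overline{S}$ via the tail property. Once that is pinned down, the rest is immediate from the already-established equivalence.
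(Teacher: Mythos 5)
Your proof is correct and takes essentially the paper's route: apply Lemma~\ref{lem:residual1}(1)$\Rightarrow$(2) to the tail SCC $\overline{S}$ to obtain $N(S_L)=S_R$ and $w(S_L)=w(S_R)$, then feed in $S_L\cap S_R=\emptyset$. The one place you diverge is the independence argument, and there your version is actually sharper than the paper's. The paper's proof writes ``$S_L\cap N(S_L)=S_L\cap S_R=\emptyset$, and therefore $S_L$ is an independent set,'' but under the paper's own convention $N(S)=\bigcup_{v\in S}N(v)\setminus S$, the equality $S_L\cap N(S_L)=\emptyset$ holds vacuously for every $S_L$ and does not by itself rule out an edge $\{u,v\}$ with both endpoints in $S_L$; what the step really relies on is the stronger inclusion $\bigcup_{u\in S_L}N(u)\subseteq S_R$, which is what the proof of Lemma~\ref{lem:residual1} establishes via $\overline{E}_L=\emptyset$ but which the lemma statement does not record. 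Your direct residual-edge argument --- the infinite-capacity arc $(l_u,r_v)$ is always present in $\overline{G}_{f^*}$, so the tail property $\delta^+(\overline{S})=\emptyset$ forces $r_v\in\overline{S}$, i.e.\ $v\in S_R$ --- re-derives exactly this stronger fact and cleanly closes the gap.
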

\begin{proof}
  Since there is no edge from $\overline{S}$ to $\overline{V}\setminus \overline{S}$,
  $N(S_L)=S_R$ and $w(S_L)=w(S_R)$ hold from Lemma~\ref{lem:residual1}.
  From the former property, $S_L\cap N(S_L)=S_L\cap S_R = \emptyset$, and therefore $S_L$ is an independent set.
  From the both properties, $w(S_L)=w(S_R)=w(N(S_L))$ holds.
\end{proof}

Indeed, the converse is also true.
\begin{lemma}\label{lem:residual3}
  If there is an independent set $S\subseteq V$ of $G$ that satisfies $w(S)=w(N(S))$, then there exists a tail strongly connected component $\overline{T}$ of $\overline{G}_{f^*}$ that satisfies $T_L\cap T_R=\emptyset$.
\end{lemma}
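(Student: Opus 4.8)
The plan is to plant the independent set $S$ directly into the residual graph $\overline{G}_{f^*}$ and then carve out a tail strongly connected component from it, using the equivalence already proved in Lemma~\ref{lem:residual1}. First I would set $\overline{S} = \overline{L}_S \cup \overline{R}_{N(S)}\subseteq\overline{V}$, so that $S_L = S$ and $S_R = N(S)$. With this choice, $N(S_L) = N(S) = S_R$ is immediate, and $w(S_L) = w(S) = w(N(S)) = w(S_R)$ follows from the hypothesis $w(S) = w(N(S))$. Hence condition~(2) of Lemma~\ref{lem:residual1} holds for $\overline{S}$, and the implication $(2)\Rightarrow(1)$ gives that there is no edge from $\overline{S}$ to $\overline{V}\setminus\overline{S}$ in $\overline{G}_{f^*}$.

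Next I would descend to a ``sink'' piece of $\overline{S}$. Since $S\neq\emptyset$ we have $\overline{S}\neq\emptyset$, so the condensation of the induced subgraph $\overline{G}_{f^*}[\overline{S}]$ has at least one sink, and I let $\overline{T}\subseteq\overline{S}$ be one of its sink strongly connected components. By construction $\overline{T}$ is strongly connected and has no edge to $\overline{S}\setminus\overline{T}$; together with the previous paragraph this means no residual edge leaves $\overline{T}$ at all (within $\overline{V}$), so $\overline{T}$ is a tail strongly connected component of $\overline{G}_{f^*}$. Finally, from $\overline{T}\subseteq\overline{S} = \overline{L}_S\cup\overline{R}_{N(S)}$ we get $T_L\subseteq S$ and $T_R\subseteq N(S)$, and since $S$ is independent we have $S\cap N(S)=\emptyset$, hence $T_L\cap T_R=\emptyset$, which is exactly the claimed conclusion. (As a sanity check, $T_L\neq\emptyset$: otherwise Lemma~\ref{lem:residual1} applied to the tail component $\overline{T}$ would force $T_R = N(T_L) = \emptyset$ and hence $\overline{T}=\emptyset$.)

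I do not expect a real obstacle: the statement is essentially the converse of Lemma~\ref{lem:residual2} rerouted through Lemma~\ref{lem:residual1}, and the only genuine content is the topological observation that a set closed under out-edges in a finite digraph contains a sink strongly connected component. The one point requiring a little care is the role of $s$ and $t$. Because the all-half vector is an optimal primal solution, every arc incident to $s$ or $t$ is saturated by $f^*$, so $s$ has only incoming residual arcs and $t$ only outgoing ones; each forms its own singleton component, and all components relevant to this lemma live inside $\overline{V}$. Accordingly ``$\delta^+(\overline{T})=\emptyset$'' here is read, as in Lemma~\ref{lem:residual1}, as ``no residual arc leaves $\overline{T}$ within $\overline{V}$''. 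A second, minor point is that the hypothesis is used with $S\neq\emptyset$ (the empty set trivially satisfies $w(S)=w(N(S))$ but carries no information), which is the only case of interest in view of Lemma~\ref{lem:Nemhauser}(5).
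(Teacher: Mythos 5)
Your proof is correct, and it takes a genuinely different (and arguably cleaner) route than the paper's. Both proofs start identically: set $\overline{S} = \overline{L}_S \cup \overline{R}_{N(S)}$ and invoke the $(2)\Rightarrow(1)$ direction of Lemma~\ref{lem:residual1} to conclude $\overline{S}$ has no outgoing residual arcs inside $\overline{V}$. The paper then chooses $S$ to be a \emph{minimal} nonempty independent set with $w(S)=w(N(S))$ and argues by contradiction that the full set $\overline{T}=\overline{S}$ must already be strongly connected: if not, a proper out-closed subset $\overline{T'}\subsetneq\overline{T}$ exists, Lemma~\ref{lem:residual1} (direction $(1)\Rightarrow(2)$) applied to $\overline{T'}$ yields a strictly smaller independent set $T'_L\subsetneq S$ with $w(T'_L)=w(N(T'_L))$, contradicting minimality. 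Your proof drops the minimality assumption entirely and instead appeals to the elementary fact that a nonempty out-closed set in a finite digraph contains a sink strongly connected component; you take $\overline{T}$ to be such a sink SCC of $\overline{G}_{f^*}[\overline{S}]$, observe it is a tail SCC of the whole restricted residual graph, and obtain $T_L\cap T_R\subseteq S\cap N(S)=\emptyset$ directly from $\overline{T}\subseteq\overline{S}$. What your route buys is that you never need to appeal to Lemma~\ref{lem:residual1} a second time, and you sidestep the slightly delicate bookkeeping in the paper's minimality argument (one must check $T'_L$ is a nonempty \emph{proper} subset of $S$ to get a real contradiction). Your two side remarks --- that $S\neq\emptyset$ is needed (the paper tacitly assumes it when speaking of ``the minimal independent set'') and that ``$\delta^+(\overline{T})=\emptyset$'' must be read within $\overline{V}$ because every $l_v$ with $w(v)>0$ has a residual back-arc to $s$ --- are both accurate and match the convention the paper uses implicitly throughout Part~(II).
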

\begin{proof}
  Let $S\subseteq V$ be the minimal independent set that satisfies $w(S)=w(N(S))$.
  We prove that $\overline{T}=\overline{L}_S\cup \overline{R}_{N(S)}$ satisfies the properties above.
  Since $N(T_L)=N(S) = T_R$ and $w(T_L)= w(S) = w(N(S)) = w(T_R)$ hold, from Lemma~\ref{lem:residual1}, there is no edge from $\overline{T}$ to $\overline{V}\setminus \overline{T}$.

  Suppose for contradiction that $\overline{T}$ is not strongly connected.
  Then, there is a subset $\overline{T'}\subsetneq \overline{T}$ such that there is no edge from $\overline{T'}$ to $\overline{T}\setminus \overline{T}'$.
  Since there is no edge from $\overline{T}$ to $\overline{V}\setminus \overline{T}$, this implies there is no edge from $\overline{T'}$ to $\overline{V}\setminus \overline{T'}$.
  Therefore, from Lemma~\ref{lem:residual1}, $w(T'_L)=w(T'_R)=w(N(T'_L))$ holds.
  This contradicts the minimality of $S$ as $T'_L\subseteq S$.
  Thus, $\overline{T}$ must be a tail strongly connected component.
  Since $S$ is an independent set, $T_L\cap T_R=S\cap N(S)=\emptyset$ holds.
\end{proof}

From Property~(4),
as long as there is an independent set $S \subseteq V$ with $w(S) = w(N(S))$,
we can safely assign integers to $S$ and $N(S)$.
That is, we set $x_u = 1$ for all $u \in N(S_L)$ and set $x_u = 0$ for all $u \in S_L$, and we remove them from the graph.
From Property~(5),
if we no longer have such an independent set, the all-half vector becomes the unique optimal primal solution.

From Lemmas~\ref{lem:residual2} and~\ref{lem:residual3},
such an independent set exists if and only if there is a tail strongly connected component $\overline{S}$ in $\overline{G}_{f^*}$ with $S_L \cap S_R = \emptyset$.
Thus, it suffices to keep finding such tail strongly connected components.
An issue here is that (apparently) we have to recompute residual graphs.
Fix an independent set $S$ with $w(S) = w(N(S))$ and let $G'$ be the graph obtained from $G$ by removing vertices in $S \cup N(S)$ and removing edges incident to $N(S)$.
To avoid recomputing the residual graph of $\overline{G'}$ again, we use the following fact.
\begin{lemma}
  The restriction of $x^*$ to $V'$ and the restriction of $y^*$ to $E'$ are optimal primal and dual solutions for $G'$, respectively.
  In particular, since $x^*$ is the all-half vector, the all-half vector in $\bbR^{V'}$ is an optimal primal solution of $G'$.
\end{lemma}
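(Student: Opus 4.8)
The statement to prove is that, after fixing variables according to $x^*$ and discarding the rows/columns they govern, the leftovers of $x^*$ and $y^*$ remain an optimal primal/dual pair for the shrunken graph $G' = (V', E')$, where $V' = S_L = \{v : x^*_v = \tfrac12\}$ wait — here $V'$ is obtained by removing $S \cup N(S)$ for a fixed independent set $S$ with $w(S) = w(N(S))$, and $E'$ is $E$ minus edges incident to $N(S)$. The plan is to mimic exactly the proof of the earlier lemma (the one relating the restriction of $x^*, y^*$ to the $V', E'$ coming from the $x^* \in \{0,1\}$ fixing), since the bookkeeping is the same: we are assigning $0$ to vertices of $S$ and $1$ to vertices of $N(S)$. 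First I would observe that the restriction $x'$ of $x^*$ is feasible for $G'$: every edge of $E'$ is covered by $x^*$ and is not incident to $N(S)$, so it is covered using only vertices of $V'$, hence $x'$ is a feasible primal solution of $G'$; and since these assignments are forced by Property~(2), $x'$ is in fact \emph{optimal} for $G'$ (any better cover of $G'$ would combine with the fixed part to beat $\opt(G)$).

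**Key steps.** The core is a weight-accounting inequality. Let me write $U = S \cup N(S)$ for the removed vertices. On the primal side, $\val(G, x^*) - \val(G', x') = \sum_{v \in U} w(v)\, x^*_v = \sum_{v \in N(S)} w(v) = w(N(S))$, since $x^*_v = 0$ on $S$ and $x^*_v = 1$ on $N(S)$. On the dual side, $\val(G, y^*) - \val(G', y') = \sum_{e \in E : e \text{ incident to } N(S)} y^*_e$, because $E'$ is precisely $E$ minus the edges incident to $N(S)$ (no edge of $S$ is dropped beyond those, as $S$ is independent so its incident edges all hit $N(S)$). Now $\sum_{e \text{ incident to } N(S)} y^*_e \le \sum_{v \in N(S)} \sum_{e \in \delta(v)} y^*_e \le \sum_{v \in N(S)} w(v) = w(N(S))$ by dual feasibility. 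Combining with $\val(G,x^*) = \val(G,y^*)$ (LP duality, since both are optimal for $G$), we get $\val(G', y') \ge \val(G', x')$; but $\val(G', x') = \lp(G')$ by the optimality of $x'$ established above, and weak duality gives $\val(G', y') \le \lp(G')$, so equality holds throughout and $y'$ is an optimal dual solution for $G'$. The final sentence of the statement — that $x'$ is again the all-half vector — is immediate since $x^* \equiv \tfrac12$ on $V' \subseteq \{v : x^*_v = \tfrac12\}$.

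**Main obstacle.** There is no deep obstacle; the one point requiring care is making sure the edge set $E'$ is described correctly so that the telescoping identities for the primal and dual differences match up, in particular that removing the edges incident to $N(S)$ removes \emph{exactly} the edges that $x'$ and $y'$ stop accounting for, using crucially that $S$ is independent so that every edge touching $S$ also touches $N(S)$. A second small point is that one must invoke the fact (already available, since at this stage $x^* \equiv \tfrac12$) that the restriction of $x^*$ being feasible and the fixed assignments being optimality-preserving together force $x'$ to be optimal for $G'$ — otherwise the chain of inequalities would only give a \emph{feasible} $y'$, not an optimal one. Both points are routine given Lemma~\ref{lem:Nemhauser} and the structure already set up.
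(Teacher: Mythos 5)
Your overall strategy is sound and differs from the paper's: you bound the dual difference from above by dual feasibility and then close the gap with weak duality, whereas the paper computes the dual difference \emph{exactly} by tracking the out-flow of $\overline{S}\cap\overline{L}$ in the network $\overline{G}$ (since $w(S_L)=w(N(S_L))$, all of that out-flow lands in $\overline{R}_{N(S_L)}$, giving $\val(G,y^*)-\val(G',y')=w(S_L)$ with equality). Your route is more elementary and avoids the flow picture; the paper's route is more informative because it also shows the restriction of $f^*$ remains a maximum flow, which feeds directly into the following corollary about the residual graph. Either works to establish the lemma itself.

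However, your primal-side computation contains a genuine error. You write that $\val(G,x^*)-\val(G',x')=\sum_{v\in U}w(v)x^*_v=w(N(S))$ ``since $x^*_v=0$ on $S$ and $x^*_v=1$ on $N(S)$.'' That is false in this lemma: here $x^*$ is the \emph{all-half vector}, as the lemma statement itself says and as you correctly use two sentences later (``$x^*\equiv\tfrac12$ on $V'$''). You have confused this lemma with the earlier one in Part~(II), where $x^*$ does take values $0,\tfrac12,1$ and one strips out the integral part. The correct computation is $\val(G,x^*)-\val(G',x')=\tfrac12\sum_{v\in S\cup N(S)}w(v)=\tfrac12\bigl(w(S)+w(N(S))\bigr)$, which equals $w(N(S))$ only because $S$ was chosen with $w(S)=w(N(S))$. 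Your final number is therefore right, but the stated justification is wrong, and without invoking $w(S)=w(N(S))$ the identity you need does not hold. A secondary, smaller issue: your appeal to Property~(2) to conclude ``$x'$ is optimal for $G'$'' is both misplaced (Property~(2) concerns integer solutions, not LP optimality) and unnecessary; the chain $\lp(G')\le\val(G',x')\le\val(G',y')\le\lp(G')$, where the middle inequality is what you derived and the outer two are weak duality together with feasibility of $x'$ and $y'$, already yields optimality of both restrictions directly.
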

\begin{proof}
  Let $x'\in\bbR^{V'}$ and $y'\in \bbR^{E'}$ be the restrictions of $x^*$ and $y^*$, respectively.
  Since the amount of out-flow from $\overline{S} \cap \overline{L}$ is $w(S_L)$ and $w(S_L) = w(N(S_L))$,
  all out-flow from $\overline{S} \cap \overline{L}$ flows into $\overline{R}_{N(S_L)}$.
  Thus, $\val(G, y^*) - \val(G',y') = w(S_L)$.
  Also, $\val(G, x^*) - \val(G',x') = \frac{1}{2}(w(S_L) + w(N(S_L))) = w(S_L)$.
  Hence $\val(G',x') = \val(G',y')$ holds, and it follows that $x'$ and $y'$ are optimal primal/dual solutions.
\end{proof}
By seeing the construction of the residual graph, we have the following.
\begin{corollary}
  The residual graph of $G'$ is obtained from $\overline{G}_{f^*}$ by removing the tail strongly connected component $\overline{S}$.
\end{corollary}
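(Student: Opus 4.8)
The plan is to reuse the lemma immediately above. It already tells us that the restriction $x'$ of $x^*$ to $V'$ and the restriction $y'$ of $y^*$ to $E'$ are optimal primal/dual solutions of $G' = G[V\setminus(S\cup N(S))]$, with $x'$ again the all-half vector. I would then build the network $\overline{G'}$ from $G'$ by exactly the construction used for $\overline{G}$, let $f' := f^*|_{\overline{E'}}$ be the restriction of the current flow to the arcs of $\overline{G'}$, and show that $f'$ is a \emph{maximum} flow of $\overline{G'}$; the corollary follows because the residual graph is determined arc by arc. The only thing to check for feasibility of $f'$ is flow conservation at the surviving copies $l_v,r_v$ ($v\in V'$): the danger is that an arc carrying positive flow is one of the discarded ones. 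This is ruled out by the structural facts established in the proof of the preceding lemma (and their obvious symmetric variants): since every arc incident to $s$ or $t$ is saturated and $\delta^+(\overline{S})=\emptyset$ in $\overline{G}_{f^*}$ by Lemma~\ref{lem:residual1}, all out-flow of $\overline{L}_S$ is absorbed by $\overline{R}_{N(S)}$, and a capacity count using $w(S)=w(N(S))$ forces all in-flow of $\overline{R}_{N(S)}$ to come from $\overline{L}_S$ and, symmetrically, all in-flow of $\overline{R}_S$ to come from $\overline{L}_{N(S)}$. Hence the only positive-flow arcs meeting a discarded copy are internal to the discarded part, so dropping them preserves conservation. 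The amount of $f'$ is $\sum_{v\in V'}f^*(s,l_v)=\sum_{v\in V'}w(v)=w(V')$, which is the capacity of the cut $\delta^+(s)$ in $\overline{G'}$, so $f'$ is maximum; equivalently, maximality follows from the preceding lemma together with Corollary~\ref{crl:max-flow-max-dual}.

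Next I would make the arc-by-arc comparison explicit. By the construction of $\overline{G}$, the arc set $\overline{E'}$ of $\overline{G'}$ is precisely the set of arcs of $\overline{E}$ both of whose endpoints survive the deletion (they lie in $\{s,t\}$ or are a copy $l_v$ or $r_v$ with $v\in V'$), and all capacities are unchanged. Since a residual arc depends only on the pair $(f(e),c(e))$ on the arc $e$, the residual graph $\overline{G'}_{f'}$ equals the sub-digraph of $\overline{G}_{f^*}$ induced on the surviving vertices. Finally, deleting a vertex $v\in S\cup N(S)$ from $G$ deletes both copies $l_v,r_v$, and $\{l_v,r_v : v\in S\cup N(S)\}$ is exactly $\overline{S}=\overline{L}_S\cup\overline{R}_{N(S)}$ together with the $\overline{L}/\overline{R}$-mates of its members. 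Because $\overline{S}$ is a tail strongly connected component, these mates (the ``other half'' copies of the removed $G$-vertices) and their incident arcs are all that is discarded beyond $\overline{S}$ itself, so the induced subgraph is $\overline{G}_{f^*}$ with the tail strongly connected component $\overline{S}$ removed, which is the claim.

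I do not expect a genuine obstacle here: the corollary is a short consequence of the preceding lemma and of the locality of the residual-graph construction. The only step needing care is the feasibility and maximality of $f'$, and its structural input — that no positive flow crosses between surviving vertices and discarded copies except within the discarded part — is already extracted in the proof of the preceding lemma from $w(S)=w(N(S))$, the saturation of the $s$- and $t$-incident arcs, and $\delta^+(\overline{S})=\emptyset$. Everything else is bookkeeping about the two bipartite copies $l_v,r_v$ of each vertex, so I expect the final write-up to be only a few lines.
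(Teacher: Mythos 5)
Your argument is sound in substance and supplies an actual proof where the paper offers only the remark ``by seeing the construction of the residual graph.'' The three ingredients you extract---saturation of every $s$- and $t$-incident arc, $\delta^+(\overline{S})=\emptyset$ in $\overline{G}_{f^*}$, and the flow-accounting from $w(S)=w(N(S))$---do force every positive-flow arc touching a discarded copy to be internal to the discarded part (the one case you only gesture at, namely that the out-flow of $\overline{L}_{N(S)}$ cannot reach $\overline{R}_{V'}$, follows from the same count: all $w(S)$ units of in-flow of $\overline{R}_S$ come from $\overline{L}_{N(S)}$, and $w(N(S))=w(S)$ is exactly the out-flow budget of $\overline{L}_{N(S)}$). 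Hence the restriction $f'$ is conservative, its value is $w(V')$, it is a maximum flow of $\overline{G'}$, and since residual arcs depend only on $(f(e),c(e))$ locally, $\overline{G'}_{f'}$ is the sub-digraph of $\overline{G}_{f^*}$ induced on the surviving vertices plus $s,t$.

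Where your write-up slides is the very last clause, ``so the induced subgraph is $\overline{G}_{f^*}$ with the tail strongly connected component $\overline{S}$ removed, which is the claim.'' You correctly observe two sentences earlier that deleting $v\in S\cup N(S)$ from $G$ deletes both copies $l_v,r_v$, so the removed vertex set is $\overline{S}\cup\overline{T}$ with $\overline{T}:=\overline{L}_{N(S)}\cup\overline{R}_S$, and since $S_L\cap S_R=\emptyset$ this $\overline{T}$ is disjoint from and has the same size as $\overline{S}$. Thus $\overline{G'}_{f'}$ is \emph{not} literally $\overline{G}_{f^*}$ minus $\overline{S}$ alone; the corollary as printed is imprecise, and appealing to ``$\overline{S}$ is a tail SCC'' does not make the two sets of deleted vertices coincide. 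The honest statement, which your own flow-accounting already gives, is that $\overline{T}$ has no incoming residual arcs from $\overline{V}\setminus(\overline{S}\cup\overline{T})$ (in-flow of $\overline{R}_S$ comes only from $\overline{L}_{N(S)}$, and out-flow of $\overline{L}_{N(S)}$ goes only to $\overline{R}_S$), so after deleting $\overline{S}$ the mates form a \emph{source} part of the residual digraph and can never re-enter a future tail strongly connected component; removing $\overline{S}$ alone is therefore algorithmically equivalent even if set-theoretically smaller. Adding that one sentence would close the gap between what you prove and what you assert; everything else in the proposal is correct.
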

Thus, what we have to do is keep finding and removing tail strongly connected components $\overline{S}$ with $S_L \cap S_R = \emptyset$ in a fixed residual graph!
We can perform this process in linear time, and in the end we obtain a graph for which the all-half vector is the unique optimal primal solution.

\paragraph{Part (III):}
In this part, we choose an arbitrary edge $\{u,v\}\in E$ that is not covered yet, and invoke recursion after setting $x_u=1$ or $x_v=1$.

Suppose that we have set $x_u = 1$. (The other case is similar.)
We explain how to update the optimal dual solution.
Let $f^*$ be the current flow.
Then, we remove all flow passing through $l_u$ or $r_u$.
Since all edges between $\overline{L}$ and $\overline{R}$ are directed from $\overline{L}$ to $\overline{R}$,
we can remove them in $O(|\delta(u)|)$ time.
After removing $l_u$ and $r_u$ from $\overline{G}$, we augment the flow as long as we can.
If the amount of the augmentation is $\Delta$,
the running time is $O(\Delta(|V|+|E|))$ from the half-integrality of $f^*$.
Thus, the running time in this step is bounded by $O(|\delta(v)| + \Delta(|V|+|E|))$.

Now we calculate the total running time spent in Part (III).
To this end, we see the connection between $\Delta$ and the change of the optimal LP value in one step.
Since we have changed $x_u$ from $\frac{1}{2}$ to $1$, we gain $\frac{w(u)}{2}$.
Note that the amount of the flow decreases by $w(u) - \Delta$.
From Proposition~\ref{prp:dual-from-flow}, we lose $\frac{w(u)}{2} - \frac{\Delta}{2}$.
In total, the LP value increases by $\frac{\Delta}{2}$.
Thus, we can decrease $k$ by $\frac{\Delta}{2}$.

Throughout the algorithm, the sum of $|\delta(v)|$ is bounded by $|E|$ and the sum of $\Delta$ is bounded by $2k$.
Since the number of leaves in the search tree is at most $2^{2k}=4^k$,
the running time caused by $|\delta(v)|$ is at most $O(4^k|E|)$.
Since we spend $O(\Delta(|V|+|E|))$ time to decrease $k$ by $\frac{\Delta}{2}$,
the running time caused by augmenting the flow is at most $O(4^k(|V|+|E|))$.
In total, the running time of Part (III) is $O(4^k(|V|+|E|))$ and this is dominant in the whole algorithm.

\bibliographystyle{abbrv}
\bibliography{paper}

\newpage

\section*{Appendix}
\appendix

\section{List of Binary \BIP Problems}\label{sec:problems}
Many important problems can be formulated as BIP2. Below we see several examples.

\begin{example}[\VC]
  \rm
  Given a graph $G=(V,E)$,
  by introducing a variable $x_v$ that represents whether a vertex $v \in V $ is contained in a vertex cover, the problem can be formulated as Binary \BIP as follows:
  \begin{align*}
    \begin{array}{lll}
      \text{minimize }&\displaystyle \sum_{v\in V}w(v)x_v\\
      \text{subject to: }&x_v+x_u\geq 1 & \text{for }\{u,v\}\in E.\\
    \end{array}
  \end{align*}
\end{example}

\begin{example}[\OCT]
  \rm
  Given a graph $G=(V,E)$,
  by introducing variables $x_v$, $l_v$, and $r_v$ each describing whether a vertex $v$ is contained
  in an odd cycle transversal $S$, in the left side of the bipartite graph $G[V\setminus S]$, and in the right side of $G[V \setminus S]$, respectively,
  the problem can be formulated as Binary \BIP as follows:
  \begin{align*}
    \begin{array}{lll}
      \text{minimize }&\displaystyle \sum_{v\in V}w(v)x_v\\
      \text{subject to: }&l_v+r_v+x_v\geq 1\enspace\text{for }v\in V,\\
      &l_u+l_v\leq 1\enspace\text{for }\{u,v\}\in E,\\
      &r_u+r_v\leq 1\enspace\text{for }\{u,v\}\in E.
    \end{array}
  \end{align*}
\end{example}

\begin{example}[\ASAT]
  \rm
  Given a 2-CNF $\mathcal{C}$ over variables $V$,
  by introducing a variable $x_v$ representing the assigned value of a variable $v$ and a variable $z_C$ representing whether a clause $C$ is unsatisfied, the problem can be formulated as Binary \BIP as
  follows:
  \begin{align*}
    \begin{array}{lll}
      \text{minimize }&\displaystyle \sum_{C\in\mathcal{C}}z_C\\
      \text{subject to: }&x_u+x_v+z_C\geq 1 & \text{for }C=(v\vee u)\in\mathcal{C},\\
      &x_u+(1-x_v)+z_C\geq 1 & \text{for }C=(v\vee\bar{u})\in\mathcal{C},\\
      &(1-x_u)+x_v+z_C\geq 1 & \text{for }C=(\bar{v}\vee u)\in\mathcal{C},\\
      &(1-x_u)+(1-x_v)+z_C\geq 1 & \text{for }C=(\bar{v}\vee\bar{u})\in\mathcal{C}.
    \end{array}
  \end{align*}
\end{example}

Below is a list of other problems that can be written in the form of Binary \BIP.
All these problems can be solved in $O(4^k(n+m))$ time, where $k$ is the solution size.
In graph problems, $n$ and $m$ denote the number of vertices and the number of edges, respectively,
and in problems related to SAT, $n$ and $m$ denote the number of variables and the number of constraints, respectively.

\begin{itemize}
\setlength{\itemsep}{0pt}
\item \textsf{Split Vertex Deletion}:
An undirected graph $G = (V, E)$ is called a \emph{split graph} if there is a
vertex set $C\subseteq V$ such that $G[C]$ is a clique and $G[V\setminus C]$ is an independent set.
Given a weighted undirected graph $G=(V,E)$, \textsf{Split Vertex Deletion}\xspace
is a problem of finding a minimum weight vertex set $S\subseteq V$ such that
$G[V\setminus S]$ becomes a split graph.
\item \textsf{Edge Bipartization:}
Given an edge-weighted undirected graph $G=(V,E)$, \textsf{Edge Bipartization}\xspace is
a problem of finding a minimum weight edge set $S \subseteq E$ such that the
graph $G' = (V, E\setminus S)$ becomes bipartite.
\item \textsf{Min SAT:}
Given a CNF $\mathcal{C}$ over variables $V$, \MINSAT is a problem of finding a Boolean assignment
that minimizes the number of satisfied clauses.
\item \textsf{Generalized Vertex Cover:}
The instance of \textsf{Generalized Vertex Cover}\xspace
is a weighted undirected graph $G=(V,E)$ together with three edge weight
functions $d_0, d_1, d_2$ that satisfy $d_0(e) \geq d_1(e) \geq d_2(e) \geq 0$
for every edge $e\in E$.
\textsf{Generalized Vertex Cover}\xspace is a problem of finding a vertex set
$S\subseteq V$ that minimizes the weight $\sum_{v\in S}w(v) + \sum_{e\in
E}d_{|e\cap S|}(e)$.
\item \textsf{Generalized 2-SAT:}
For a given 2-CNF formula $\mathcal{C}$ over variables $V$ and a function $w:V
\to \bbN$, \textsf{Generalized 2-SAT}\xspace is a problem of finding a Boolean
assignment that makes the formula true and minimizes the total cost of variables that are assigned true.
\item \textsf{Complement of Maximum Clique:}
For a given graph, \textsf{Complement of Maximum Clique}\xspace is a problem of
finding a minimum weight vertex set $S$ such that $G[V\setminus S]$ becomes a clique.
\item \textsf{Almost Boolean 2-CSP:}
Given a Boolean 2-CSP $\mathcal{C}$ over variables $V$,
\textsf{Almost Boolean 2-CSP}\xspace is a problem of finding a Boolean
assignment that minimizes the number of unsatisfied constraints.
\item \textsf{Directed Min UnCut:}
Given an edge-weighted directed graph $G=(V,E)$,
\textsf{Directed Min UnCut}\xspace is a problem of finding a partition $(S,V\setminus S)$ that minimizes the number of edges not going from $S$ to $V \setminus S$.
\end{itemize}

\section{Linear-Time Reduction to \VCLP}\label{sec:reduction}
Let $\calI$ be an \BIP instance.
An \emph{LP solution pair} of $\calI$ refers to a pair of primal/dual solutions for $\calI$.
We say that an LP solution pair is \emph{optimal} if both the primal solution and the dual solution are optimal,
and is \emph{half-integral} if both the primal solution and the dual solution are half-integral.
In this section, we prove Theorem~\ref{thm:reduction} by showing a linear-time algorithm that, given a \BIP instance $\calI$ and its half-integral optimal LP solution pair, outputs a vertex cover instance $G$ with $\gap(G) = \gap(\calI)$ and its half-integral optimal LP solution pair.

Our algorithm consists of a sequence of reductions and we have mainly three parts.
\begin{itemize}
\item[(I)] We transform the instance so that each constraint has the form of $x_i+x_j+z_{i,j}\geq c_{i,j}$, i.e., the coefficient of every shared variable is one and every constraint has an independent variable.
\item[(II)] We remove all independent variables.
\item[(III)] We construct an instance of \VC by restricting the domain of variables to be binary.
\end{itemize}

\paragraph{Part (I).}
Let $\calI$ be an instance of $\BIP$.
We simplify $\calI$ so that each constraint has an independent variable as follows.
For each constraint of the form $a_{i,j}x_i+b_{i,j}x_j\geq c_{i,j}$,
we create an independent variable $z_{i,j}$ of sufficiently large weight and replace the constraint with $a_{i,j}x_i+b_{i,j}x_j+z_{i,j}\geq c_{i,j}$.
Let $\calI^1$ be the resulting instance.
\begin{lemma}
  We have $\gap(\calI^1) = \gap(\calI)$.
  From a (half-integral) optimal LP solution pair for $\calI$,
  we can compute a (half-integral) optimal LP solution pair for $\calI^1$ in linear time.
\end{lemma}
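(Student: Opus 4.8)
The plan is to verify the three claims — equality of gaps, and that half-integral (resp. general) optimal LP solution pairs for $\calI$ lift to ones for $\calI^1$ in linear time — essentially by bookkeeping, using the single extra degree of freedom that the new independent variables $z_{i,j}$ provide. The key point is that the independent variable $z_{i,j}$ we add to a constraint $a_{i,j}x_i+b_{i,j}x_j\geq c_{i,j}$ has \emph{large} weight, so in any optimal solution it will be set to exactly $\max(0,\,c_{i,j}-a_{i,j}x_i-b_{i,j}x_j)$, which is $0$ whenever the original constraint was satisfied. I would first fix what "sufficiently large weight" means: taking $d_{i,j}$ strictly larger than $\sum_{i}w_i\cdot(\text{a bound on }x_i$ in some optimal solution$)$ plus the weight of all other independent variables suffices, or more cleanly, larger than $\opt(\calI)$ (the existing optimum, which is finite by assumption). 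Either choice forces $z_{i,j}=0$ in every optimal IP solution of $\calI^1$, so $\opt(\calI^1)=\opt(\calI)$ via the obvious bijection of solutions.

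Next I would handle the LP side. Given an optimal LP solution $x$ for $\calI$ (with its original independent variables $z$), extend it to $\calI^1$ by setting each new variable $\hat z_{i,j}=0$; feasibility holds because the original constraint was satisfied, and the objective is unchanged since the new variables contribute $0$. Conversely, any feasible LP solution for $\calI^1$ restricts to a feasible LP solution for $\calI$ of no greater cost (drop the new variables; each original constraint is implied because $\hat z_{i,j}\geq 0$ and costs a nonnegative amount). Hence $\lp(\calI^1)=\lp(\calI)$, and combined with the IP equality, $\gap(\calI^1)=\gap(\calI)$. Half-integrality is preserved in the forward direction since we only append zeros.

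For the dual, I would write the dual of $\calI^1$: it has the dual variables of $\calI$ plus, for each new constraint, the reused constraint-multiplier $y_{i,j}$; the only new dual constraint is $y_{i,j}\leq d_{i,j}$ coming from the new independent variable $\hat z_{i,j}$, which is slack at the old optimal dual since $y_{i,j}$ there is bounded by the old weights (and we chose $d_{i,j}$ huge). So the old optimal dual solution is feasible and optimal for $\calI^1$, again preserving half-integrality and computable in linear time (just copy it). All of these transformations touch each constraint and variable a constant number of times, giving the linear time bound.

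The main obstacle — really the only subtle point — is pinning down "sufficiently large weight" in a way that is simultaneously (a) computable in linear time from the given data, (b) large enough to force $z_{i,j}=0$ in all relevant optimal solutions, and (c) still an \emph{integer} so that $\calI^1$ is a genuine \BIP instance and half-integrality of solutions is not disturbed. I expect the cleanest route is to take $d_{i,j}=1+\opt(\calI)$ (or $1+\sum_i w_i + \sum d$-values of original independent variables, to avoid needing $\opt$ explicitly), and then argue that any optimal solution of $\calI^1$ with some $z_{i,j}\geq 1$ would have cost $\geq d_{i,j}>\opt(\calI)=\opt(\calI^1)$, a contradiction; the LP version of this needs the parallel (and routine) observation that $\lp(\calI^1)\leq\lp(\calI)\leq\opt(\calI)<d_{i,j}$.
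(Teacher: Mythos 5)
Your proposal takes essentially the same approach as the paper: the paper's proof is a one-liner observing that the sufficiently large weight forces each new $z_{i,j}$ to be zero in any optimal LP/IP solution, so optimal values are unchanged and the given LP solution pair can be reused verbatim. You flesh this out (making ``sufficiently large'' concrete, verifying feasibility of the lifted primal and dual, and checking the extra dual constraints $y_{i,j}\leq d_{i,j}$ are slack), which is a more careful but not a different argument.
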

\begin{proof}
  Note that $z_{i,j}$ must be zero in any optimal LP/IP solution.
  Thus, the modification does not change optimal LP/IP values,
  and we can directly use the given optimal LP solution pair for $\calI$ as the one for $\calI^1$.
\end{proof}

Now we partition the constraint set $E$ to $E^{+}$, $E^{\neq}$, and $E^{-}$ as follows.
For each constraint of the form $a_{i,j}x_i + b_{i,j} + z_{i,j} \geq c_{i,j}$,
we add it to $E^{+}$ if $(a_{i,j}, b_{i,j}) = (1,1)$,
add it to $E^{\neq}$ if $(a_{i,j},b_{i,j}) = (1,-1)$,
and add it to $E^{-}$ if $(a_{i,j},b_{i,j}) = (-1,-1)$.
(If $(a_{i,j},b_{i,j}) = (-1,1)$, then the constraint is added to $E^{\neq}$ after swapping $i$ and $j$.)
We regard $E^+$ and $E^-$ as sets of unordered pairs and $E^{\neq}$ as a set of ordered pairs.

Though all LP variables in this section are constrained to be non-negative,
in what follows, we omit them from LP formalizations for brevity.
The LP relaxation of $\calI^1$ and its dual can be written as follows.
\begin{align*}
  \textbf{(Primal-1)} & \quad
  \begin{array}{lll}
    \text{minimize }&\displaystyle \sum_{i \in V}w_i x_i + \sum_{(i,j)\in E}d_{i,j}z_{i,j}\\
    \text{subject to }&x_i+x_j+z_{i,j}\geq c_{i,j}&\text{for }(i,j)\in E^{+},\\
      &x_i-x_j+z_{i,j}\geq c_{i,j}&\text{for }(i,j)\in E^{\neq},\\
    	&-x_i-x_j+z_{i,j}\geq c_{i,j}&\text{for }(i,j)\in E^{-}.\\
  \end{array} \\
  \textbf{(Dual-1)} & \quad
  \begin{array}{lll}
    \text{maximize }&\displaystyle \sum_{(i,j)\in E}c_{i,j}y_{i,j}\\
    \text{subject to }&\displaystyle \sum_{(i,j)\in E^{+} \cup E^{\neq}}y_{i,j}-\sum_{(j,i)\in E^{\neq}\cup E^-}y_{j,i}\leq w_i &
    \text{for }i\in V,\\
    &y_{i,j}\leq d_{i,j}&\text{for }(i,j)\in E. \\
  \end{array}
\end{align*}

Now, we want to transform the instance so that each constraint has the form of $x_i+x_j+z_{i,j}\geq c_{i,j}$, i.e., $a_{i,j}=b_{i,j}=1$ for every $(i,j) \in E$.
Let $X$ and $M$ be sufficiently large integers.
For a variable $x_i$ of weight $w_i$, we create two variables $x^+_i$ and $x^-_i$.
We set the weight of $x^+_i$ to be $w^+_i=M+w_i$ and the weight of $x^-_i$ to be $w^-_i=M$.
For each constraint, we replace $+x_i$ with $x^+_i$ and $-x_i$ with $x^-_i-X$.
Finally, we add a new constraint of the form $x^+_i+x^-_i\geq X$.
Let $\calI^2$ be the resulting instance.

The primal and dual LP relaxations of $\calI^2$ can be written as follows:
\begin{align*}
  \textbf{(Primal-2)} & \quad
  \begin{array}{lll}
    \text{minimize }&\displaystyle \sum_{i \in V} ((M+w_i) x^+_i+M x^-_i) + \sum_{(i,j)\in
    E}d_{i,j}z_{i,j}\\
    \text{subject to }&x^+_i+x^+_j+z_{i,j}\geq c_{i,j}&\text{for }(i,j)\in E^+,\\
      &x^+_i+x^-_j+z_{i,j}\geq c_{i,j}+X&\text{for }(i,j)\in E^{\neq},\\
      &x^-_i+x^-_j+z_{i,j}\geq c_{i,j}+2X&\text{for }(i,j)\in E^-,\\
      &x^+_i+x^-_i\geq X&\text{for }i\in V,\\
  \end{array} \\
  \textbf{(Dual-2)} & \quad
  \begin{array}{lll}
    \text{maximize }&\multicolumn{2}{l}{\displaystyle \sum_{(i,j)\in E}c_{i,j}y_{i,j}+X\sum_{(i,j)\in E^{\neq}} y_{i,j}+2X\sum_{(i,j)\in E^-} y_{i,j}+X\sum_{i \in V}  \gamma_i}\\
    \text{subject to }&\displaystyle \sum_{(i,j)\in E^+\cup E^{\neq}}y_{i,j} +\gamma_i\leq M+w_i & \text{for }i\in V,\\
    &\displaystyle \sum_{(j,i)\in E^{\neq}\cup E^-}y_{j,i}+\gamma_i\leq M & \text{for }i\in V,\\
    &y_{i,j}\leq d_{i,j}&\text{for }(i,j)\in E.\\
  \end{array}
\end{align*}

\begin{lemma}
  We have $\gap(\calI^2) = \gap(\calI^1)$.
  Furthermore, 
  from a (half-integral) optimal LP solution pair for $\calI^1$,
  we can compute a (half-integral) optimal LP solution pair for $\calI^2$ in linear time.
\end{lemma}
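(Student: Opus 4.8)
The plan is to verify the claim by establishing a bijective correspondence between feasible LP solutions of $\calI^1$ and feasible LP solutions of $\calI^2$, on both the primal and dual sides, under which objective values differ by a fixed additive constant; this simultaneously shows $\opt(\calI^2) - \opt(\calI^1)$ and $\lp(\calI^2) - \lp(\calI^1)$ equal the same constant, hence $\gap(\calI^2) = \gap(\calI^1)$. The correspondence should also be computable (and half-integrality-preserving) in both directions in linear time, which handles the second sentence.

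\textbf{Primal side.} Given a feasible solution $(x, z)$ of (Primal-1), I would set $x^+_i = x_i$ and $x^-_i = X - x_i$ (using that $X$ is chosen large enough that $X - x_i \ge 0$ on all optimal/half-integral solutions, after first bounding each $x_i$ — the standard trick is that in any optimal LP solution the $x_i$ are bounded by the constants $c_{i,j}$, so $X$ can be fixed accordingly), and keep the same $z_{i,j}$. One then checks constraint by constraint: a constraint $x_i + x_j + z_{i,j} \ge c_{i,j}$ in $E^+$ becomes $x^+_i + x^+_j + z_{i,j} \ge c_{i,j}$; a constraint $x_i - x_j + z_{i,j} \ge c_{i,j}$ in $E^{\neq}$ becomes $x^+_i + x^-_j + z_{i,j} \ge c_{i,j} + X$ since $-x_j = x^-_j - X$; a constraint $-x_i - x_j + z_{i,j} \ge c_{i,j}$ in $E^-$ becomes $x^-_i + x^-_j + z_{i,j} \ge c_{i,j} + 2X$; and the new constraints $x^+_i + x^-_i \ge X$ hold with equality. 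Conversely, given feasible $(x^+, x^-, z)$ for (Primal-2), one argues that in an optimal solution $x^+_i + x^-_i = X$ (otherwise decrease $x^-_i$, which only relaxes the $E^{\neq}, E^-$ constraints and strictly decreases the objective, using $M > 0$), so we can recover $x_i = x^+_i$. The objective shift is $\sum_i M(x^+_i + x^-_i) = MX|V|$ on the matched solutions, a fixed constant; subtracting it gives back $\val(\calI^1, \cdot)$. Half-integrality is clearly preserved since $X$ is an integer.

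\textbf{Dual side.} Here I would map a feasible dual $y$ of (Dual-1) to $(y, \gamma)$ for (Dual-2) by choosing $\gamma_i$ to make both (Dual-2) constraints for $i$ tight, or at least feasible: set $\gamma_i = \min\bigl(M + w_i - \sum_{(i,j)\in E^+\cup E^{\neq}} y_{i,j},\ M - \sum_{(j,i)\in E^{\neq}\cup E^-} y_{j,i}\bigr)$. Feasibility of $y$ in (Dual-1) says $\sum_{(i,j)\in E^+\cup E^{\neq}} y_{i,j} - \sum_{(j,i)\in E^{\neq}\cup E^-} y_{j,i} \le w_i$, which is exactly what is needed for this $\gamma_i$ to be nonnegative (for $M$ large), and the two (Dual-2) constraints then hold by construction. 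The objective of (Dual-2) exceeds that of (Dual-1) by $X\sum_{E^{\neq}} y_{i,j} + 2X\sum_{E^-} y_{i,j} + X\sum_i \gamma_i$; one checks that at the matched optimum this equals the same constant $MX|V|$ by a telescoping/double-counting argument (each $\gamma_i$ absorbs exactly the slack, and the $X$ and $2X$ terms account for the constant offsets introduced on the primal side). Going back, from $(y,\gamma)$ one simply drops $\gamma$. I would invoke LP duality / complementary slackness to conclude that if the (Primal-1, Dual-1) pair was optimal then the constructed (Primal-2, Dual-2) pair is optimal, rather than re-deriving optimality from scratch.

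\textbf{Main obstacle.} The routine part is the per-constraint checking; the part requiring care is pinning down how large $X$ and $M$ must be so that (i) $X - x_i \ge 0$ on the solutions we care about and (ii) in every optimal (Primal-2) solution $x^+_i + x^-_i = X$ and in every optimal (Dual-2) solution the $\gamma_i$ genuinely absorb the slack, so that the objective gap is \emph{exactly} the constant $MX|V|$ and not merely bounded by it. This is where one must first establish an a priori bound on the variables in optimal solutions of $\calI^1$ (in terms of the $c_{i,j}$ and $d_{i,j}$) and then set $X, M$ above that bound; with that in hand, the equality of gaps and the linear-time, half-integrality-preserving nature of both transformations follow directly.
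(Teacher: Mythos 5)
Your proof takes essentially the same route as the paper: the primal map $x^+_i = x_i$, $x^-_i = X - x_i$ is identical, and your dual $\gamma_i = \min\bigl(M + w_i - \sum_{(i,j)\in E^+\cup E^{\neq}} y_{i,j},\ M - \sum_{(j,i)\in E^{\neq}\cup E^-} y_{j,i}\bigr)$ reduces to the paper's explicit choice $\gamma_i = M - \sum_{(j,i)\in E^{\neq}\cup E^-} y_{j,i}$, since the (Dual-1) feasibility constraint implies the second term is always the minimum. The only cosmetic difference is that you flag the need to size $X$ and $M$ explicitly whereas the paper just takes them ``sufficiently large''; both arguments leave the same routine verification (that the objective shift is exactly $MX|V|$ on both sides) to the reader.
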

\begin{proof}
  Because $M$ is sufficiently large, for any optimal primal/IP solution of $\calI^2$, the constraint of the form $x^+_i+x^-_i \geq X$ must be satisfied with equality.
  Therefore, we can assume that $x^-_i=X-x^+_i$ holds in any optimal primal/IP solution.
 
  For any primal/IP solution $x \in \bbR^V$ of $\calI^1$, 
  we can obtain a primal/IP solution of $\calI^2$ by setting $x^+_i=x_i$ and $x^-_i=X-x_i$, 
  and for any primal/IP solution $(x^+,x^-)$ of $\calI^2$, we can obtain a primal/IP solution of $\calI^1$ by setting $x_i=x^+_i$.
  Thus, optimal LP/IP values increase exactly by $\sum_{i \in V}((M+w_i)x_i+M(X-x_i)-w_i x_i)=|V| X M$.
  Hence, $\gap(\calI^2) = \gap(\calI^1)$ and we can compute the optimal primal solution for $\calI^2$ in linear time.

  We can obtain an optimal dual solution $(y,\gamma)$ of $\calI^2$ from an optimal dual solution $y^*$ of $\calI^1$ by setting $y = y^*$ and 
  \[
    \gamma_i=M-\sum_{(j,i)\in E^{\neq} \cup E^-}y_{j,i}.
  \]
  We can easily check that it is feasible and its value is exactly $\lp(\calI^2)$.
\end{proof}
For readability, we restate \textbf{(Primal-2)} and \textbf{(Dual-2)} as follows.
\begin{align*}
  \textbf{(Primal-2')} & \quad
  \begin{array}{lll}
    \text{minimize }&\displaystyle \sum_{i \in V} w_i x_i + \sum_{\{i,j\}\in E}d_{i,j}z_{i,j}\\
    \text{subject to }&x_i+x_j+z_{i,j}\geq c_{i,j}&\text{for }\{i,j\}\in E,\\
  \end{array} \\
  \textbf{(Dual-2')} & \quad
  \begin{array}{lll}
    \text{maximize }&\displaystyle \sum_{\{i,j\}\in
    E}c_{i,j}y_{i,j}\\
    \text{subject to }&\displaystyle \sum_{\{i,j\}\in E}y_{i,j}\leq w_i& \text{for }i\in V,\\
    &y_{i,j}\leq d_{i,j}&\text{for }\{i,j\}\in E.
  \end{array}
\end{align*}

\paragraph{Part (II).}
We want to remove independent variables from an instance $\calI^2$ of the form \textbf{(Primal-2')}.
For a constraint of the form $x_i+x_j+z_{i,j}\geq c_{i,j}$ with an independent variable $z_{i,j}$ of weight
$d_{i,j}$, we create two variables $z^i_{i,j}$ and $z^j_{i,j}$ of weight $d_{i,j}$.
Then, we replace the constraint with the following three constraints:
\begin{align*}
  x_i+z^i_{i,j} \geq c_{i,j}, \quad x_j+z^j_{i,j} \geq c_{i,j}, \quad z^i_{i,j}+z^j_{i,j} \geq c_{i,j}.
\end{align*}
Let $\calI^3$ be the resulting instance.

Let $y^i_{i,j}$, $y^j_{i,j}$, and $y^z_{i,j}$ be dual variables corresponding to the newly added three constraints.
Then, the primal and dual LP relaxations of $\calI^3$ can be written as follows:
\begin{align*}
  \textbf{(Primal-3)} & \quad
  \begin{array}{lll}
    \text{minimize }&\displaystyle \sum_{i \in V} w_i x_i + \sum_{\{i,j\}\in
    E}d_{i,j}(z^i_{i,j}+z^j_{i,j})\\
    \text{subject to }&x_i+z^i_{i,j}\geq c_{i,j}&\text{for }\{i,j\}\in E,\\
      &x_j+z^j_{i,j}\geq c_{i,j}&\text{for }\{i,j\}\in E,\\
      &z^i_{i,j}+z^j_{i,j}\geq c_{i,j}&\text{for }\{i,j\}\in E,\\
  \end{array} \\
  \textbf{(Dual-3)} & \quad
  \begin{array}{lll}
    \text{maximize }&\multicolumn{2}{l}{\displaystyle \sum_{\{i,j\}\in
    E}c_{i,j}(y^i_{i,j}+y^j_{i,j}+y^z_{i,j})}\\
    \text{subject to }&\displaystyle \sum_{\{i,j\}\in E}y^i_{i,j}\leq w_i& \text{for }i\in V,\\
    &y^i_{i,j}+y^z_{i,j}\leq d_{i,j}&\text{for }\{i,j\}\in E.
  \end{array}
\end{align*}

\begin{lemma}
  We have $\gap(\calI^3) = \gap(\calI^2)$.
  Furthermore, 
  from a (half-integral) optimal LP solution pair for $\calI^2$,
  we can compute a (half-integral) optimal LP solution pair for $\calI^3$ in linear time.
\end{lemma}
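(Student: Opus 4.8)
The plan is to exhibit explicit, coordinate-wise maps between LP solution pairs of $\calI^2$ (written as \textbf{(Primal-2')}/\textbf{(Dual-2')}) and $\calI^3$, and to show that passing from $\calI^2$ to $\calI^3$ shifts both $\lp$ and $\opt$ by exactly $D:=\sum_{\{i,j\}\in E}d_{i,j}c_{i,j}$; then $\gap(\calI^3)=\gap(\calI^2)$ follows at once. Throughout I assume without loss of generality that $c_{i,j}\ge 1$ for every constraint, since a constraint $x_i+x_j+z_{i,j}\ge c_{i,j}$ with $c_{i,j}\le 0$ is vacuous and can be deleted together with its independent variable without changing anything.

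For the primal direction, given a (half-integral) optimal primal solution $(x^*,z^*)$ of $\calI^2$ I keep $x_i:=x^*_i$ and, for each edge $\{i,j\}$, split $z^*_{i,j}+c_{i,j}$ over the two new variables by setting $z^j_{i,j}:=\max(0,\,c_{i,j}-x^*_j)$ and $z^i_{i,j}:=(z^*_{i,j}+c_{i,j})-z^j_{i,j}$. The crux is the inequality $z^*_{i,j}+c_{i,j}\ge\max(0,\,c_{i,j}-x^*_i)+\max(0,\,c_{i,j}-x^*_j)$, which I would prove by a short case split on whether $x^*_i+x^*_j\ge c_{i,j}$, using $z^*_{i,j}\ge\max(0,\,c_{i,j}-x^*_i-x^*_j)$ and $c_{i,j}\ge1$; it gives $z^i_{i,j}\ge\max(0,\,c_{i,j}-x^*_i)\ge0$, so all three constraints of \textbf{(Primal-3)} hold, and the solution stays half-integral because $c_{i,j}$ is an integer. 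Its objective equals $\lp(\calI^2)+D$ since $z^i_{i,j}+z^j_{i,j}=z^*_{i,j}+c_{i,j}$. Conversely, from any feasible $(x,z^i,z^j)$ of $\calI^3$ I take $x^*_i:=x_i$ and $z^*_{i,j}:=\max(0,\,c_{i,j}-x_i-x_j)$; the constraints of \textbf{(Primal-3)} give $z^i_{i,j}+z^j_{i,j}\ge\max(c_{i,j},\,2c_{i,j}-x_i-x_j)=c_{i,j}+\max(0,\,c_{i,j}-x_i-x_j)$, so the reconstructed $\calI^2$-solution has objective at most that of the original minus $D$. Both maps send integral solutions to integral solutions, so applying the first to optimal LP and IP solutions of $\calI^2$ and the second in the other direction gives $\lp(\calI^3)=\lp(\calI^2)+D$ and $\opt(\calI^3)=\opt(\calI^2)+D$; the forward map is clearly linear time.

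For the dual direction, given a (half-integral) optimal dual solution $y^*$ of $\calI^2$ I set, for each edge $\{i,j\}$, $y^i_{i,j}:=y^*_{i,j}$, $y^j_{i,j}:=y^*_{i,j}$, $y^z_{i,j}:=d_{i,j}-y^*_{i,j}$. Then the vertex constraints of \textbf{(Dual-3)} coincide with those of \textbf{(Dual-2')}; the edge constraints hold with equality, $y^i_{i,j}+y^z_{i,j}=d_{i,j}$; and $y^z_{i,j}\ge0$ because $y^*_{i,j}\le d_{i,j}$. The objective is $\sum_{\{i,j\}}c_{i,j}(y^*_{i,j}+d_{i,j})=\lp(\calI^2)+D=\lp(\calI^3)$, so this dual solution is optimal; it is half-integral since $d_{i,j}$ is an integer, and it is produced in linear time.

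Combining everything, $\gap(\calI^3)=\opt(\calI^3)-\lp(\calI^3)=\opt(\calI^2)-\lp(\calI^2)=\gap(\calI^2)$, and the primal map of the second paragraph together with the dual map of the third paragraph turns a half-integral optimal LP solution pair of $\calI^2$ into one of $\calI^3$ in linear time. I expect the only nonroutine step to be the primal feasibility check — the inequality $z^*_{i,j}+c_{i,j}\ge\max(0,\,c_{i,j}-x^*_i)+\max(0,\,c_{i,j}-x^*_j)$ that legitimizes the chosen split — which is precisely where the case analysis and the assumption $c_{i,j}\ge1$ enter; the objective accounting and the dual construction are routine substitutions.
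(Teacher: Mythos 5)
Your proposal is correct and follows essentially the same route as the paper: the dual assignment $y^i_{i,j}=y^j_{i,j}=y^*_{i,j}$, $y^z_{i,j}=d_{i,j}-y^*_{i,j}$ is identical, and your primal maps (splitting the slack $z^*_{i,j}+c_{i,j}$ over $z^i_{i,j},z^j_{i,j}$ one way, and recovering $z_{i,j}=\max(0,c_{i,j}-x_i-x_j)$ the other way, with the objective shifted by exactly $\sum_{\{i,j\}\in E}d_{i,j}c_{i,j}$) are the same construction, just argued via explicit two-sided inequalities instead of the paper's ``we can assume'' normalizations. Your preliminary normalization $c_{i,j}\geq 1$ is a harmless (indeed prudent, since nonpositive $c_{i,j}$ would make the naive split produce negative variables) extra step that the paper leaves implicit.
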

\begin{proof}
  For any primal/IP solution $x \in \bbR^V$ of $\calI^2$, we can obtain a primal/IP solution of the $\calI^3$ by setting $z^i_{i,j}=\max(0,c_{i,j}-x_i)$ and $z^j_{i,j}=\max(c_{i,j}-z^i_{i,j},c_{i,j}-x_j)$.
  This is feasible since $x_i+z^i_{i,j} = \max(x_i + c_{i,j})\geq c_{i,j}$,
  $x_j+z^j_{i,j} = \max(x_j + c_{i,j})\geq c_{i,j}$, and $z^i_{i,j}+z^j_{i,j}=c_{i,j}+\max(0,c_{i,j}-x_i-x_j)\geq c_{i,j}$.

  Suppose we have an optimal primal/IP solution of $\calI^3$.
  Since $z_{i,j}$ appears only in one constraint, we can assume that, in the optimal primal/IP solution, $z_{i,j}=\max(0,c_{i,j}-x_i-x_j)$ holds.
  Therefore, $z^i_{i,j}+z^j_{i,j}=z_{i,j}+c_{i,j}$ holds.
  Since $z^i_{i,j}$ and $z^j_{i,j}$ appear only in these constraints,
  in the optimal primal/IP solution, 
  we can assume that $z^i_{i,j}=\max(0,c_{i,j}-x_i)$ and $z^j_{i,j}=\max(c_{i,j}-z^i_{i,j},c_{i,j}-x_j)$ hold.
  Then, we can obtain a solution for $\calI^2$ by setting $z_{i,j}=z^i_{i,j}+z^j_{i,j}-c_{i,j}$.
  This is feasible because $x_i+x_j+z_{i,j}=x_i+x_j+z^i_{i,j}+z^j_{i,j}-c_{i,j}\geq c_{i,j}$.

  Thus, optimal LP/IP values increase by $\sum_{\{i,j\}\in E}d_{i,j}(z^i_{i,j}+z^j_{i,j}-z_{i,j})=\sum_{\{i,j\}\in E}d_{i,j}c_{i,j}$.
  Hence, $\gap(\calI^3) = \gap(\calI^2)$ and we can compute the optimal primal solution for $\calI^3$ in linear time.

  We can obtain a dual optimal solution $(y^i,y^j,y^z)$ of $\calI^3$ from the given dual optimal solution $y$ of $\calI^2$ as follows:
  \begin{align*}
    y^i_{i,j} =y^j_{i,j}=y_{i,j}, \quad  y^z_{i,j}&=d_{i,j}-y_{i,j}.
  \end{align*}
  We can easily check that the obtained dual solution is feasible and has the same value as the primal solution obtained above.
\end{proof}

For readability, we restate \textbf{(Primal-3)} and \textbf{(Dual-3)} as follows.
\begin{align*}
  \textbf{(Primal-3')} & \quad
  \begin{array}{lll}
    \text{minimize }&\displaystyle \sum_{i \in V} w_i x_i\\
    \text{subject to }&x_i+x_j\geq c_{i,j}&\text{for }\{i,j\}\in E,\\
  \end{array} \\
  \textbf{(Dual-3')} & \quad
  \begin{array}{lll}
    \text{maximize }&\displaystyle \sum_{\{i,j\}\in
    E}c_{i,j}y_{i,j}\\
    \text{subject to }&\displaystyle \sum_{\{i,j\}\in E}y_{i,j}\leq w_i& \text{for }i\in V.
  \end{array}
\end{align*}

\paragraph{Part (III).}
Let $\calI^3$ be an instance of the form \textbf{(Primal-3')}.
Finally, we reduce it to an instance of \VC by restricting the domain of variables to be binary.
We use the following lemma.

\begin{lemma}\label{lem:fixing}
  For any half-integral optimal primal solution $x^L \in \bbN^V_{1/2}$ of $\calI^3$, there exists an optimal IP solution $x \in \bbN^V$ that satisfies the following:
  \begin{align*}
    \begin{array}{ll}
    x_i=x^L_i & (x^L_i\in\mathbb{N}),\\
    \lfloor x^L_i\rfloor\leq x_i\leq \lceil x^L_i \rceil & (\text{otherwise}).
    \end{array}
  \end{align*}
\end{lemma}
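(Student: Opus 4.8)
The plan is to argue by a potential‑reducing exchange. For each $i$ let $B_i=[\lfloor x^L_i\rfloor,\lceil x^L_i\rceil]$ be the ``box'' (a single integer point when $x^L_i\in\bbN$), and for an integer vector $x$ let $\Phi(x)=\sum_i\mathrm{dist}(x_i,B_i)$ measure how far $x$ lies outside the box. Among all optimal IP solutions of $\calI^3$ choose one, $x^*$, minimizing $\Phi$; it suffices to show $\Phi(x^*)=0$, since then $x^*_i\in B_i$ for every $i$, which is precisely the conclusion (with $x_i=x^L_i$ forced when $x^L_i\in\bbN$). Assume toward a contradiction $\Phi(x^*)>0$, and set $A=\{i:x^*_i>\lceil x^L_i\rceil\}$ and $B=\{i:x^*_i<\lfloor x^L_i\rfloor\}$, so $A\cup B\neq\emptyset$.

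The core of the proof is two structural claims, each proved by combining a feasibility inequality with the half‑integrality of $x^L$ and the integrality of $x^*$ and of $c_{i,j}$. First: if a constraint $\{i,j\}$ is tight at $x^*$ and $i\in A$, then $j\in B$. Indeed $x^*_j=c_{i,j}-x^*_i\le (x^L_i+x^L_j)-(\lceil x^L_i\rceil+1)=\lfloor x^L_j\rfloor-1+\big((x^L_j-\lfloor x^L_j\rfloor)-(\lceil x^L_i\rceil-x^L_i)\big)\le\lfloor x^L_j\rfloor-\tfrac12$, and since $x^*_j$ is an integer this forces $x^*_j\le\lfloor x^L_j\rfloor-1$. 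Symmetrically: if $\{i,j\}$ is tight at $x^L$ and $i\in B$, then $j\in A$ (from $x^*_j\ge c_{i,j}-x^*_i\ge\lceil x^L_j\rceil+\tfrac12$ and integrality of $x^*_j$).

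Now I combine the second claim with optimality of $x^L$. Consider $x^{L}_{\varepsilon}:=x^L+\varepsilon(\mathbf 1_A-\mathbf 1_B)$ for small $\varepsilon>0$. On a constraint tight at $x^L$ incident to a vertex of $B$, the other endpoint lies in $A$ by the second claim, so the perturbation leaves the constraint satisfied; on a constraint not tight at $x^L$ the slack is at least $\tfrac12$ by half‑integrality, absorbing the change of at most $2\varepsilon$ once $\varepsilon\le\tfrac14$; and $x^L$ stays nonnegative because $i\in B$ forces $\lfloor x^L_i\rfloor\ge 1$. Hence $x^{L}_\varepsilon$ is LP‑feasible, and since $x^L$ is LP‑optimal we must have $w(A)\ge w(B)$. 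Then, using the first claim, $x^+:=x^*+\mathbf 1_B-\mathbf 1_A$ is again an IP solution: any constraint whose value would strictly decrease has an endpoint in $A$, and if it is tight at $x^*$ its other endpoint is in $B$ by the first claim, so a single decrease is cancelled, while a double decrease needs both endpoints in $A$, whence $x^*_i+x^*_j\ge x^L_i+x^L_j+2\ge c_{i,j}+2$; nonnegativity of $x^+$ is immediate. Thus $x^+$ is feasible with $w\cdot x^+=\opt+w(B)-w(A)\le\opt$, so $x^+$ is optimal, yet $\Phi(x^+)=\Phi(x^*)-|A|-|B|<\Phi(x^*)$, contradicting the choice of $x^*$. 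Therefore $\Phi(x^*)=0$.

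The main obstacle is the pair of structural claims together with the decision to shift the offending coordinates by a single unit rather than projecting them fully onto the box: a full projection would raise the objective by $\sum_{i\in B}w_i(\lfloor x^L_i\rfloor-x^*_i)$, and bounding that against $\sum_{i\in A}w_i(x^*_i-\lceil x^L_i\rceil)$ would need a weighted Hall‑type comparison, whereas after a unit shift the bookkeeping collapses exactly to the inequality $w(A)\ge w(B)$ that LP‑optimality of $x^L$ provides. The only rounding subtlety, namely whether $x^L_i$ is an integer or a genuine half‑integer, is handled uniformly by the elementary observation that an integer which is at most $s-\tfrac12$, for an integer $s$, is in fact at most $s-1$.
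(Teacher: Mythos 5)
Your proof is correct, but it takes a genuinely different route from the paper's. The paper fixes an arbitrary optimal IP solution $x^I$ and compares three explicitly constructed solutions: $x^1$ (rounding the half-integral coordinates of $x^L$ towards $x^I$), $x^2$ (additionally shifting the integral coordinates by one towards $x^I$), and $x^3$ (shifting $x^I$ by one half towards $x^L$); from the optimality of $x^I$ and $x^L$ it derives $w_I=0$ and $w^L+\frac{1}{2}w_H=w^I$, so that $x^1$ itself is an optimal IP solution lying in the required box. You instead run an extremal exchange: among optimal IP solutions pick one minimizing the distance $\Phi$ to the box $[\lfloor x^L\rfloor,\lceil x^L\rceil]$, prove two complementary-slackness-type claims (a constraint tight at $x^*$ with an endpoint above the box has its other endpoint below it, and a constraint tight at $x^L$ with an endpoint below the box has its other endpoint above it), use the second claim plus LP-optimality of $x^L$ to get $w(A)\ge w(B)$, and then the first claim to show that the unit shift $x^*+\mathbf{1}_B-\mathbf{1}_A$ remains feasible and optimal while strictly decreasing $\Phi$. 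I checked your estimates (both claims, LP feasibility of the perturbation for $\varepsilon\le\frac14$, the slack-$2$ bound when both endpoints lie in $A$, and the nonnegativity arguments) and they are sound; the only steps left implicit are harmless: the minimum of $\Phi$ over optimal IP solutions is attained because $\Phi$ takes nonnegative integer values there, and a constraint that loses exactly one unit and is not tight at $x^*$ has integer slack at least $1$, which is what absorbs that decrease. In terms of trade-offs, the paper's argument is a one-shot computation that immediately exhibits the rounded solution starting from any optimal $x^I$, whereas yours isolates the structural reason the rounding works (the two tightness claims) and replaces the bookkeeping with $w_I$ and $w_H$ by a potential-function contradiction; both rest on the same ingredients, namely half-integrality of $x^L$, integrality of $c_{i,j}$ and of the IP solution, and optimality of the two solutions being compared.
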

\begin{proof}
Let $x^I$ be an optimal IP solution.
We construct an optimal IP solution that satisfies the property above.
We define the set of variables as follows:
\begin{align*}
& X_I=\{i \in V \mid x^L_i\in\mathbb{N}\}, \quad
X_H=\{i \in V \mid x^L_i\not\in\mathbb{N}\},\\
& X_<=\{i \in V \mid x^L_i<x^I_i\}, \quad
X_==\{i \in V \mid x^L_i=x^I_i\}, \quad
X_>=\{i \in V \mid x^L_i>x^I_i\}.
\end{align*}
Let $w^L=\lp(\calI^3)$ and $w^I=\opt(\calI^3)$.
For a set of variables $S$, we define $w(S)=\sum_{i\in S}w_i$.
Let $w_I=w(X_I\cap X_<)-w(X_I\cap X_>)$ and $w_H=w(X_H\cap X_<)-w(X_H\cap X_>)$.

First, we construct an IP solution $x^1 \in \bbN^V$ by rounding the non-integral part of $x^L$ towards
$x^I$:
\begin{align*}
x^1_i=\begin{cases}
x^L_i&(i\in X_I),\\
x^L_i+\frac{1}{2}&(i\in X_H\cap X_<),\\
x^L_i-\frac{1}{2}&(i\in X_H\cap X_>).
\end{cases}
\end{align*}
We can check the feasibility of $x^1$ by considering the following three cases.
Fix a constraint $x_i+x_j\geq c_{i,j}$.
If $i,j\in X_I$, then $x^1_i+x^1_j=x^L_i+x^L_j\geq
c_{i,j}$.
If $i\in X_I$ and $j\in X_H$, then $x^1_i+x^1_j\geq x^L_i+x^L_j-\frac{1}{2}\geq c_{i,j}$ since
$x^L_i+x^L_j$ is not an integer.
If $i,j\in X_H$, then $x^1_i+x^1_j\geq\min(x^L_i+x^L_j,x^I_i+x^I_j)\geq c_{i,j}$.
We have $\val(\calI^3,x^1) = w^L+\frac{1}{2}w_H$.
From the optimality of $x^I$, we obtain
\begin{align}\label{eq:w1}
  w^L+\frac{1}{2}w_H\geq w^I.
\end{align}

Then, we construct another IP solution $x^2 \in \bbN^V$ by shifting $x^L$ towards $x^I$:
\begin{align*}
  x^2_i=\begin{cases}
    x^L_i&(i\in X_=),\\
    x^L_i+1&(i\in X_I\cap X_<),\\
    x^L_i-1&(i\in X_I\cap X_>),\\
    x^L_i+\frac{1}{2}&(i\in X_H\cap X_<),\\
    x^L_i-\frac{1}{2}&(i\in X_H\cap X_>).
  \end{cases}
\end{align*}
The feasibility of $x^2$ can be checked by considering the following two cases.
Fix a constraint $x_i+x_j\geq c_{i,j}$.
If $i\in X_I$ and $j\in X_H$, then $x^2_i+x^2_j\geq\min(x^L_i+x^L_j-\frac{1}{2},x^I_i+x^I_j)\geq c_{i,j}$.
Otherwise, $x^2_i+x^2_j\geq\min(x^L_i+x^L_j,x^I_i+x^I_j)\geq c_{i,j}$.
We have $\val(\calI^3,x^2) = w^L+\frac{1}{2}w_H+w_I$.
From the optimality of $x^I$, we obtain
\begin{align}\label{eq:w2}
  w^L+\frac{1}{2}w_H+w_I\geq w^I.
\end{align}

Finally, we construct a half-integral primal solution $x^3 \in \bbN^V_{1/2}$ by shifting $x^I$ towards $x^L$:
\begin{align*}
x^3_i=\begin{cases}
x^I_i&(i\in X_=),\\
x^I_i-\frac{1}{2}&(i\in X_<),\\
x^I_i+\frac{1}{2}&(i\in X_>).
\end{cases}
\end{align*}
Since $x^3_i+x^3_j\geq\min(x^L_i+x^L_j,x^I_i+x^I_j)\geq c_{i,j}$ holds, $x^3$ is feasible.
We have $\val(\calI^3, x^3) = w^I-\frac{1}{2}w_H-\frac{1}{2}w_I$.
From the optimality of $x^L$, we obtain
\begin{align}\label{eq:w3}
  w^I-\frac{1}{2}w_H-\frac{1}{2}w_I\geq w^L.
\end{align}

From~\eqref{eq:w1}+\eqref{eq:w3}, we obtain $w_I\leq 0$, and from~\eqref{eq:w2}+\eqref{eq:w3},
we obtain $w_I\geq 0$.
Therefore, $w_I=0$ holds.
Then, from~\eqref{eq:w1} and~\eqref{eq:w3}, we obtain $w^L+\frac{1}{2}w_H=w^I$.
Thus, $x^1$ is an optimal IP solution that satisfies the given property.
\end{proof}

Now we show how to construct an instance $G$ of \VC.
Let $x^*$ be a half-integral optimal primal solution for $\calI^3$.
For each variable $x_i$, we replace it with $(\lfloor x^*_i\rfloor + x_i)$.
A constraint $x_i+x_j\geq c_{i,j}$ that is satisfied in $x^*$ with strict inequality is always satisfied.
Thus, only constraints that is satisfied with equality remain.
If it consists of two integral variables, $c_{i,j}-\lfloor x^*_i\rfloor-\lfloor x^*_j\rfloor=0$ holds,
and if it consists of two half-integral variables, $c_{i,j}-\lfloor x^*_i\rfloor-\lfloor x^*_j\rfloor=1$ holds.
Let $E'$ be the set of the latter type of constraints.
Then, the primal and dual LP relaxations can be written as follows:
\begin{align*}
  \textbf{(Primal-VC')} & \quad
  \begin{array}{lll}
    \text{minimize }&\displaystyle \sum_{i \in V} w_i \lfloor x^*_i\rfloor + \sum_{i \in V} w_i x_i\\
    \text{subject to }&x_i+x_j\geq 1&\text{for }\{i,j\}\in E',\\
  \end{array} \\
  \textbf{(Dual-VC')} & \quad
  \begin{array}{lll}
    \text{maximize }&\displaystyle \sum_{i \in V} w_i \lfloor x^*_i\rfloor+\sum_{\{i,j\}\in
    E'}y_{i,j}\\
    \text{subject to }&\displaystyle \sum_{\{i,j\}\in E'}y_{i,j}\leq w_i& \text{for }i\in V.
  \end{array}
\end{align*}
Note that we do not need an upper bound on $x_i$ since it always has a binary value in an optimal primal/IP solution.
The primal problem $G$ is \VC with the vertex set $V$ and the edge set $E'$.

\begin{lemma}
  We have $\gap(G) = \gap(\calI^3)$.
  Furthermore, 
  from a half-integral optimal LP solution pair for $\calI^3$,
  we can compute a half-integral optimal LP solution pair for $G$ in linear time.
\end{lemma}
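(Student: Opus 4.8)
The plan is to relate the IP and LP worlds of $\calI^3$ and $G$ by the pair of mutually inverse affine shifts $x \mapsto \lfloor x^* \rfloor + x$ and $x \mapsto x - \lfloor x^* \rfloor$, where $x^*$ is the given half-integral optimal primal solution, and then to read off an optimal LP solution pair for $G$ by restriction, checking optimality through complementary slackness. Throughout I write $c := \sum_{i\in V} w_i \lfloor x^*_i \rfloor$ for the additive constant appearing in (Primal-VC'), and $\tilde x_i := x^*_i - \lfloor x^*_i \rfloor \in \{0,\frac12\}$ for the ``fractional part'' of $x^*$.

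First I would set up the correspondence between solutions. If $x$ is any feasible solution of $G$ (LP or IP), then $\lfloor x^* \rfloor + x$ is feasible for $\calI^3$: for $\{i,j\}\in E'$ we get $(\lfloor x^*_i\rfloor + x_i) + (\lfloor x^*_j\rfloor + x_j) \ge (c_{i,j}-1) + 1 = c_{i,j}$, while for $\{i,j\}\in E\setminus E'$ — which, as the text observes, are exactly the constraints satisfied strictly by $x^*$ together with the equality constraints on two integral variables — one checks directly that $\lfloor x^*_i\rfloor + \lfloor x^*_j\rfloor \ge c_{i,j}$ already (using that $c_{i,j}$ is an integer, that $x^*_i+x^*_j$ exceeds $c_{i,j}$ by at least $\frac12$ whenever it is non-integral, and that a tight constraint cannot pair an integral with a half-integral coordinate), so adding the nonnegative $x_i,x_j$ keeps the constraint satisfied. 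Conversely, $\tilde x$ is feasible for $G$ because every edge of $E'$ is tight in $x^*$, so $\tilde x_i+\tilde x_j = (x^*_i+x^*_j)-(\lfloor x^*_i\rfloor+\lfloor x^*_j\rfloor) = c_{i,j}-(c_{i,j}-1)=1$; and for the IP direction Lemma~\ref{lem:fixing} gives an optimal IP solution $x^1$ of $\calI^3$ with $\lfloor x^*_i\rfloor \le x^1_i\le \lceil x^*_i\rceil$ and $x^1_i = x^*_i$ whenever $x^*_i$ is integral, so $x^1 - \lfloor x^*\rfloor \in \{0,1\}^V$ is feasible for $G$. Each of these maps changes the objective by exactly $c$, so $\opt(\calI^3)=c+\opt(G)$ and $\lp(\calI^3)=c+\lp(G)$, and hence $\gap(G)=\opt(G)-\lp(G)=\opt(\calI^3)-\lp(\calI^3)=\gap(\calI^3)$, the constant $c$ cancelling.

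Next I would exhibit the half-integral optimal LP solution pair for $G$. For the primal I take $\tilde x$, which is half-integral, feasible, and of value $\sum_i w_i \tilde x_i = \lp(\calI^3)-c = \lp(G)$, hence optimal. For the dual I take $y'$, the restriction of the given optimal dual $y^*$ to $E'$; it is half-integral, and feasible for the vertex-cover dual of $G$ since $\sum_{j:\{i,j\}\in E'} y'_{i,j} \le \sum_{j:\{i,j\}\in E} y^*_{i,j}\le w_i$. To show $y'$ is optimal I verify complementary slackness of $(\tilde x, y')$ for the vertex-cover LP of $G$. Dual slackness is immediate: $y'_{i,j}>0$ forces $\{i,j\}\in E'$, hence $\tilde x_i+\tilde x_j=1$. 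For primal slackness, suppose $\tilde x_i>0$, i.e.\ $x^*_i$ is a positive half-integer; then $x^*_i>0$, so complementary slackness for $\calI^3$ gives $\sum_{j:\{i,j\}\in E} y^*_{i,j}=w_i$, and the key point is that any edge $\{i,j\}$ with $y^*_{i,j}>0$ incident to such an $i$ lies in $E'$: it is tight in $x^*$, so $x^*_i+x^*_j=c_{i,j}\in\mathbb Z$ forces $x^*_j$ to be half-integral too. Hence $\sum_{j:\{i,j\}\in E'} y'_{i,j}=w_i$, complementary slackness holds, and both $\tilde x$ and $y'$ are optimal. All of $\lfloor x^*\rfloor$, the set $E'$, $\tilde x$, and $y'$ are obtained by a single pass over the variables and constraints, so the construction runs in $O(n+m)$ time.

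I expect the main obstacle to be the dual half of the statement, specifically the claim that restricting $y^*$ to $E'$ preserves optimality; this rests entirely on the parity observation that a tight constraint of $\calI^3$ cannot mix an integral and a half-integral coordinate, so all dual mass incident to a half-integral vertex of $x^*$ already lives on edges of $E'$. The primal side, the feasibility bookkeeping for $E\setminus E'$, and the gap identity are then routine.
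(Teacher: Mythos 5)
Your proof is correct and takes essentially the same approach as the paper: both rely on Lemma~\ref{lem:fixing} for the IP side and on complementary slackness for $\calI^3$ to certify optimality of the restricted dual $y^*|_{E'}$. The only structural difference is that the paper establishes dual optimality by a chain of equalities showing $\val(G,y^*)=\sum_{\{i,j\}\in E}c_{i,j}y^*_{i,j}=\lp(\calI^3)$, whereas you instead verify the primal and dual complementary-slackness conditions for the pair $(\tilde x,y')$ directly; your formulation makes the key parity observation (a tight constraint cannot pair an integral with a half-integral coordinate, so all positive dual mass at a half-integral vertex lies on $E'$) explicit, while in the paper it is absorbed into the final step of the computation.
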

\begin{proof}
  It is clear that the optimal LP value does not change.
  Also from Lemma~\ref{lem:fixing}, it does not change the optimal IP value.
  Thus, we have $\gap(G) = \gap(\calI^3)$ and we can compute a (half-integral) optimal primal solution for $G$ in linear time.

  Let $y^*$ be the given half-integral optimal dual solution.
  It is obviously feasible in the reduced problem.
  From the complementary slackness, for each variable $x_i>0$, $\sum_{\{i,j\}\in E}y^*_{i,j}=w_i$ holds, 
  and for each constraint $x_i+x_j>c_{i,j}$, $y^*_{i,j}=0$ holds.
  Therefore, 
  \begin{align*}
  & \val(G,y^*) =  \sum_{i \in V} w_i \lfloor x^*_i\rfloor+\sum_{\{i,j\}\in E'}y^*_{i,j} 
  = \sum_{i \in V}\sum_{\{i,j\} \in E}y_{i,j} \lfloor x^*_i\rfloor+\sum_{\{i,j\}\in E'}y^*_{i,j} \quad (\text{complementary slackness})\\
  =& \sum_{\{i,j\}\in E}y^*_{i,j}(\lfloor x^*_i\rfloor + \lfloor x^*_j\rfloor)+\sum_{\{i,j\}\in E'}y^*_{i,j} \\
  =& \sum_{\{i,j\}\in E}y^*_{i,j}(\lfloor x^*_i\rfloor + \lfloor x^*_j\rfloor)+\sum_{\{i,j\}\in E'}(c_{i,j}-\lfloor x^*_i\rfloor-\lfloor x^*_j\rfloor)y^*_{i,j} \quad (c_{i,j} - \lfloor x^*_i\rfloor-\lfloor x^*_j\rfloor = 1 \text{ for } (i,j) \in E') \\
  =&\sum_{\{i,j\}\in E}y^*_{i,j}(\lfloor x^*_i\rfloor + \lfloor x^*_j\rfloor)+\sum_{\{i,j\}\in E}(c_{i,j}-\lfloor x^*_i\rfloor-\lfloor x^*_j\rfloor)y^*_{i,j} 
   \quad (c_{i,j} - \lfloor x^*_i\rfloor-\lfloor x^*_j\rfloor = 0 \text{ for } (i,j) \in E \setminus E') \\
  =&\sum_{\{i,j\}\in E}c_{i,j}y^*_{i,j}.
  \end{align*}
  Thus, $y^*$ is a (half-integral) optimal dual solution in $G$.  
\end{proof}

As a result, an instance $\calI$ of \BIP is reduced to an instance $G$ of \VC with $\gap(G) = \gap(\calI)$.
Moreover, we can compute a half-integral optimal LP solution pair for $G$ from a half-integral optimal LP solution pair for $\calI$ in linear time.

\section{A Linear-Time FPT Algorithm for \NMC}\label{sec:multiway}
In this section, we give an $O(4^k(|V|+|E|))$-time FPT algorithm for \NMC.
Let $G=(V,E)$ be an undirected graph and $T\subseteq V$ be a terminal set.
In \NMC, the objective is to find the minimum set of vertices $S$ so that every pair of terminals becomes separated by removing $S$.
We use the size of the optimal solution as a parameter.

For a terminal $t\in T$, we call a subset $C_t$ of vertices an
\emph{isolating cut} of $t$ if its removal separates $t$ from $T\setminus\{t\}$.
We denote the connected component containing $t$ after removing an isolating cut $C_t$ by
$R(C_t)$.
Xiao~\cite{Xiao10} proved the following lemma.
\begin{lemma}[\cite{Xiao10}]\label{lem:isolating}
Let $C_t$ be a minimum isolating cut of a terminal $t$ in a graph $G$.
Then there exists a minimum multiway cut $C$ of $G$ that contains no vertices of
$R(C_i)$.
\end{lemma}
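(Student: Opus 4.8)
The plan is a standard ``uncrossing'' (or ``pushing'') argument: starting from an arbitrary minimum multiway cut $C$ of $G$, I will modify it into another minimum multiway cut that avoids $R:=R(C_t)$. The first step is a normalization. Since $C_t$ is a minimum isolating cut, the component $R$ of $t$ in $G-C_t$ satisfies $N(R)\subseteq C_t$, and $N(R)$ is itself an isolating cut of $t$; by minimality $N(R)=C_t$, so I may assume $C_t=N(R)$, and in particular $|C_t|=|N(R)|$. The two facts I will use repeatedly are: (a) the outer-boundary function $X\mapsto|N(X)|$ is submodular, i.e. $|N(X)|+|N(Y)|\ge|N(X\cap Y)|+|N(X\cup Y)|$ (a standard vertex-counting fact, for which I would include a one-line proof); and (b) for every $Z\subseteq V$ with $t\in Z$ and $Z\cap(T\setminus\{t\})=\emptyset$, the set $N(Z)$ is an isolating cut of $t$, hence $|N(Z)|\ge|C_t|$.

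Next, let $P$ be the vertex set of the connected component of $t$ in $G-C$; then $t\in P$, $P$ contains no other terminal, $C\cap P=\emptyset$, and $N(P)\subseteq C$. Put $S:=R\cup P$, which contains $t$ and no other terminal, and define the candidate
\[
  C^*:=(C\setminus S)\cup N(S).
\]
I claim $C^*$ is a minimum multiway cut with $C^*\cap R=\emptyset$, which amounts to three checks. Disjointness from $R$ is immediate, since both $C\setminus S$ and $N(S)$ avoid $S\supseteq R$. That $C^*$ is a multiway cut I would argue as follows: any path in $G-C^*$ between two terminals other than $t$ must stay inside $V\setminus S$, since both its endpoints lie outside $S$ and it cannot cross the boundary $N(S)\subseteq C^*$; but on $V\setminus S$ the set $C^*$ contains all of $C$, so such a path would already exist in $G-C$ — a contradiction; and any path between $t$ and another terminal crosses $N(S)\subseteq C^*$. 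The remaining check, $|C^*|\le|C|$, is the crux.

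For the size bound: since $N(S)$ is disjoint from $S$ and $C\cap S=C\cap R$, we have $|C^*|=|C|-|C\cap R|+|N(S)|-|N(S)\cap C|$, so it suffices to show $|N(S)|-|N(S)\cap C|\le|C\cap R|$. Applying submodularity to $R$ and $P$ and using $|N(R\cap P)|\ge|N(R)|$ from (b) yields $|N(S)|=|N(R\cup P)|\le|N(P)|$, and $|N(P)|=|N(P)\cap C|$ because $N(P)\subseteq C$. Finally I would show $N(P)\cap C\subseteq(C\cap R)\cup(N(S)\cap C)$: a vertex $v\in N(P)\cap C$ either lies in $R$, hence in $C\cap R$, or lies outside $R$, in which case $v\notin R\cup P=S$ while $v$ is adjacent to $P\subseteq S$, so $v\in N(S)\cap C$. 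Chaining these inequalities gives $|N(S)|-|N(S)\cap C|\le|N(P)|-|N(S)\cap C|\le|C\cap R|$, hence $|C^*|\le|C|$; since $C$ is minimum, $C^*$ is a minimum multiway cut disjoint from $R(C_t)$.

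I expect the size bound to be the only real obstacle. The disjointness and multiway-cut properties are routine once the correct set $S=R\cup P$ has been identified — and identifying it is the one genuinely non-obvious point, since taking $S$ to be $R$ alone or $P$ alone does not work: $R\cup P$ is precisely what makes the submodular inequality line up with the vertices of $C$ that get deleted. The submodularity of $|N(\cdot)|$ is standard. The one degenerate case to keep in mind is $C_t=\emptyset$ (i.e. $t$ already separated from the other terminals in $G$), where $N(R)=\emptyset$, $S=R$, and every displayed identity holds trivially.
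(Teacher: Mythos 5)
The paper states this lemma as a citation of Xiao's result and does not include a proof, so there is no in-paper argument to compare against. Your uncrossing argument is correct: the normalization $C_t = N(R)$, the choice $S = R\cup P$, submodularity of the vertex boundary $|N(\cdot)|$, the resulting bound $|N(S)| \le |N(P)|$ (from $|N(R\cap P)| \ge |N(R)|$), and the covering $N(P)\cap C \subseteq (C\cap R)\cup(N(S)\cap C)$ all check out and yield $|C^*|\le|C|$; this is the same style of pushing argument used in Xiao's original proof, and you correctly identify that $S=R$ or $S=P$ alone would not make the counting close. One point to make explicit: \NMC requires the cut to be disjoint from the terminal set, so you should also verify that $C^*\cap T=\emptyset$ and, when invoking your fact~(b), that $N(R\cap P)$ is terminal-free. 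Both hold, since $N(S)\subseteq N(R)\cup N(P)\subseteq C_t\cup C$ and $N(R\cap P)\subseteq R\cup C_t$, all of which avoid $T\setminus\{t\}$, and $t$ itself is excluded because $t\in S$ and $t\in R\cap P$; but as written, claim~(b) is stated slightly too broadly (an arbitrary such $Z$ could have a terminal on its boundary), so either restrict it or check terminal-freeness at the point of use.
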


A \emph{farthest minimum isolating cut} of $t$ is an isolating cut $C_t$ that has the largest
$R(C_t)$ among all isolating cuts of the minimum cardinality.
The farthest minimum isolating cut $C_t$ is unique and can be computed in $O(|C_t|(|V|+|E|))$ time
by a single maximum flow computation from $t$ to $T\setminus\{t\}$.
Moreover, if we are given the maximum flow, it can be computed in $O(|V|+|E|)$ time~\cite{picard80}.

Now, we describe our algorithm for \NMC.
First, if the number of terminals $|T|$ is at most 1, we return YES.
Otherwise, we choose an arbitrary terminal $t\in T$.
If $t$ is adjacent to $T\setminus\{t\}$, there is no multiway cut.
If $t$ is already separated from $T\setminus\{t\}$, we remove the set of vertices that are reachable
from $t$ and choose another terminal.
Otherwise, we compute the farthest minimum isolating cut $C_t$ of $t$ by computing a maximum flow
from $t$ to $T\setminus\{t\}$.
If the size of the cut (and thus the amount of the flow) exceeds $k$, we immediately return NO.
We keep and reuse the maximum flow for future updates.
From Lemma~\ref{lem:isolating}, there exists a minimum multiway cut that does not contain any
vertices of $V(R(C_t))$, thus we can safely contract $V(R(C_t))$ to $t$.

After the contraction of $V(R(C_t))$, the farthest minimum isolating cut becomes $N(t)$.
We choose an arbitrary vertex $v\in N(t)$ and branch into two cases, i.e., the minimum multiway cut
contains $v$ or not.
For the former case, we include $v$ into the output and remove it from the graph.
The farthest minimum isolating cut changes to $C_t \setminus\{v\}$ and we can update the maximum
flow by simply removing the flow passing throw $v$.
For the latter case, we contract $v$ to $t$.
We update the maximum flow by searching augmenting paths in the residual graph, recompute
the farthest minimum isolating cut $C_t$ of $t$, and then contract $V(R(C_t))$ to $t$.
Then, we apply the algorithm recursively.

Finally, we analyze the running time of the algorithm.
For simplicity, we fix the ordering $t_1,\ldots,t_{|T|}$ of handling terminals.
Let $T_i(N,\ell,\lambda)$ be the running time of the algorithm to handle $t_i$.
where $N$ is the sum of the number of vertices and edges, $\ell$ is the size of a multiway cut we want to find, and $\lambda$ is the size of
the minimum isolating cut of $t_i$ in the current graph.
Then, we want to bound $T_1(|V|+|E|,k,\lambda_1)$, where $k$ is the given parameter on the solution
size and $\lambda_1$ is the size of the minimum isolating cut of $t_1$ in the input graph $G$.
For notational simplicity, we define $T_{|T|+1}(\cdot,\cdot,\cdot) = 0$.

Suppose $\lambda=0$.
Then, we can find the set $R$ of vertices reachable from $t_i$ in $O(N')$ time, where $N'$ is the sum of the numbers of vertices and edges in $R$.
Let $\lambda'$ be the size of the minimum isolating cut for the next terminal $t_{i+1}$.
Then, we can compute the maximum flow in $O(\lambda' (N-N'))$ time.
Therefore, $T_i(N,\ell,0)\leq T_{i+1}(N-N',\ell,\lambda')+O(N'+\lambda'(N-N'))$ holds.

Suppose $\lambda>0$.
We can compute the farthest minimum isolating cut in $O(N)$ time from the given maximum flow.
Let $v\in N(t)$ be the chosen vertex.
If we include $v$ into the output, $\ell$ and $\lambda$ decrease by $1$.
In this case, we can update the maximum flow in $O(N)$ time.
If we contract $v$ to $t$, $\ell$ remains the same but $\lambda$ increases by at least one because
$C_t=N(t)$ is the unique minimum isolating cut of $t$.
Let $\Delta$ be the increase of $\lambda$.
In this case, we can update the maximum flow by finding augmenting paths $\Delta$ times.
Therefore, $T_i(N,\ell,\lambda)\leq T_i(N,\ell-1,\lambda-1)+T_i(N,\ell,\lambda+\Delta)+O(\Delta N)$ holds.

From the above two recurrences of $T_i$, we can obtain $T_i(N,\ell,\lambda)=O(2^{2\ell-\lambda}N)$.
Thus the total running time is bounded by $O(2^{2k-\lambda_1}(|V|+|E|)) = O(4^k(|V|+|E|))$.

\end{document}